\def\ps@headings{%
\def\@oddhead{\mbox{}\scriptsize\rightmark \hfil \thepage}%
\def\@evenhead{\scriptsize\thepage \hfil\leftmark\mbox{}}%
\def\@oddfoot{}%
\def\@evenfoot{}}
\makeatother \pagestyle{headings}
\newtheorem{theorem}{Theorem}
\newtheorem{lemma}{Lemma}
\newtheorem{definition}{Definition}
\def\squareforqed{\hbox{\rlap{$\sqcap$}$\sqcup$}}
\def\qed{\ifmmode\squareforqed\else{\nobreak\hfil
\penalty50\hskip1em\null\nobreak\hfil\squareforqed
\parfillskip=0pt\finalhyphendemerits=0\endgraf}\fi}
\begin{document}

\title{Optimal CSMA-based Wireless Communication with Worst-case Delay and Non-uniform Sizes\thanks{This research is supported in part by National Science Foundation awards CNS-11-17539. Any opinions, findings, and conclusions or recommendations expressed here are those of the authors and do not necessarily reflect the views of the funding agencies or the U.S. government.}\vspace{-3mm}}

\author{\IEEEauthorblockN{
Hongxing Li and Nitin Vaidya}\vspace{-3mm}\\
\IEEEauthorblockA{Coordinated Science Laboratory, University of Illinois at Urbana-Champaign, USA\\
Email: \{hxli, nhv\}@illinois.edu}\vspace{-8mm}
}

\maketitle

\begin{abstract}
Carrier Sense Multiple Access (CSMA) protocols have been shown to reach the full capacity region for data communication in wireless networks, with polynomial complexity. However, current literature achieves the throughput optimality with an exponential delay scaling with the network size, even in a simplified scenario for transmission jobs with uniform sizes. Although CSMA protocols with order-optimal average delay have been proposed for specific topologies, no existing work can provide \emph{worst-case delay guarantee} for each job in \emph{general network settings}, not to mention the case when the jobs have \emph{non-uniform lengths} while the throughput optimality is still targeted. In this paper, we tackle on this issue by proposing a two-timescale CSMA-based data communication protocol with dynamic decisions on rate control, link scheduling, job transmission and dropping in polynomial complexity. Through rigorous analysis, we demonstrate that the proposed protocol can achieve a throughput utility arbitrarily close to its offline optima for jobs with non-uniform sizes and worst-case delay guarantees, with a tradeoff of longer maximum allowable delay.

\end{abstract}

\section{Introduction}\label{sec:intro}
The efficacy of a wireless communication algorithm can be examined with three criteria: high throughput, low response delay and low computation/communication complexity. However, it is commonly accepted that there is a tradeoff among the three dimensions of algorithm performances \cite{shah-itw10}. Maximum-weight scheduling (MWS) \cite{tassiulas-tac92} algorithms are proven to be throughput-optimal, however incurring exponential computation complexity as the network size grows up. Low-complexity algorithms (\cite{lin-jsac06} and references therein) are proposed to approximate the MWS, while achieving only a fraction of the optimal throughput.

CSMA-style random access control protocols have been studied intensely in recent years for its low complexity and provable optimality in throughput maximization \cite{jiang-ton10, ni-infocom10}. Nevertheless, it comes with an exponentially long delay scaling with the network size \cite{mahdi-infocom11}. Although some recent efforts \cite{shah-sigmetrics10, subram-isit11, jiang-ita10, jiang-infocom11, mahdi-infocom11, lam-isit12, huang-infocom13} try to improve the delay performance and have even achieved asymptotic bounds on the average delay \cite{shah-sigmetrics10, subram-isit11, mahdi-infocom11} in specific topologies, the worst-case delay guarantee, which is a more practical concern in real-world implementations\footnote{It is especially true for delay-constrained applications including video streaming.} ensuring that each transmission job is either served or dropped before its maximum allowable delay, is yet to be studied. The difficulty further escalates if we still aim to obtain (close-to-)optimal throughput utility at the same time, with general network topologies and low computation/communication complexities.

Apart from above, a common assumption is shared by current literature such that each transmission job has the same size and is packed in a single data unit, \emph{e.g.}, one data packet, which can be completely delivered within one time slot. This idealized model fails to capture the diverse job sizes of some mainstream applications. For example, one \emph{Twitter} update may need just tens of bytes while a video clip on \emph{Youtube} may be in the size of several mega-bytes. A more practical model should allow the existence of transmission jobs consisted of one/multiple consecutive data packets, which should either be fully delivered to the destination or completely dropped. Partial reception of the transmission job brings no utility to the network, \emph{e.g.}, a video clip with missing information may not be decodable. When coupled with the worst-case delay guarantee, \emph{i.e.}, each transmission job instead of one packet is either delivered or dropped before its service deadline, we should explore novel designs for the low-complexity throughput-optimal CSMA protocol.

In this paper, we investigate the throughput-utility optimal CSMA protocol in general network topologies with low computation/communication complexities and worst-case delay guarantees for transmission jobs with diverse sizes. A two-timescale algorithm is proposed to dynamically make decisions in each time slot on: 1) \emph{rate control}: how many jobs should be admitted into the network such that congestion could be avoided while the throughput utility is maximized? 2) \emph{link scheduling}: which subset of the links should be simultaneously scheduled for transmission such that no collision will occur while the network capacity can be fully exploited? 3) \emph{job transmission}:  how many jobs, from each category of job sizes and worst-case delay requirements, should be transmitted over the scheduled links? 4) \emph{job dropping}: how many jobs of each category should be dropped so as to meet the worst-case delay bounds? A CSMA-style random access control mechanism is integrated with the Lyapunov optimization framework \cite{book2010} for the algorithm design. To be specific, the link scheduling is carried out with the CSMA protocol and \emph{randomly }generates collision-free transmissions, while the rate control, and job transmission and dropping decisions are \emph{deterministically} made based on the network status in each time slot. Rigorous analysis demonstrates that our protocol can achieve a throughput utility, which can be made arbitrarily close to its optima, with polynomial computation/communication complexity at each link and guaranteed worst-case delay for jobs with non-uniform sizes, at a tradeoff of longer maximum allowable delay.

The contribution of this paper is summarized as follows,

\vspace{1mm}
\noindent $\triangleright$ To our best knowledge, we are the first to investigate the existence of worst-case delay guarantees and non-uniform job sizes for CSMA protocols in general network topologies.

\vspace{1mm}
\noindent $\triangleright$ A CSMA-based two-timescale wireless communication algorithm is proposed to dynamically decide the rate control, link scheduling, and job transmission and dropping in each time slot, with an objective to maximize the time-averaged throughput utility.

\vspace{1mm}
\noindent $\triangleright$ Theoretical analysis demonstrates that our proposed algorithm can guarantee the worst-case delay for all job sizes, and achieve a throughput-utility that can be arbitrarily close to its optimality, with a polynomial computation/communication complexity at each link and the tradeoff of a longer maximum allowable delay.


The rest of the paper is organized as follows. We discuss the related works in Sec.~\ref{sec:relatedwork} and present the problem model in Sec.~\ref{sec:model}. The two-timescale dynamic algorithm is introduced with details in Sec.~\ref{sec:algorithm}. In Sec.~\ref{sec:analysis}, we rigorously analyze the efficiency of our algorithm. Finally, we conclude this paper in Sec.~\ref{sec:conclusion}.

\section{Related Work}\label{sec:relatedwork}
CSMA protocols have attracted tremendous attention in recent years \cite{jiang-ton10, ni-infocom10} mainly due to its potential to simultaneously achieve high throughput and low complexity, and its implementation in a distributed fashion.

With perfect and instantaneous carrier sensing assumption (no collision will happen), Jiang \emph{et al.~}\cite{jiang-ton10} introduce a continuous-time CSMA protocol, that can achieve the optimal throughput. A discrete-time queue-length based CSMA protocol is next proposed by Ni \emph{et al.~}\cite{ni-infocom10} to reach the full capacity region, explicitly considering the avoidance of collisions without the perfect carrier sensing assumption.

However, it has been shown that it is hard to achieve throughput optimality and low delay simultaneously with CSMA protocols \cite{shah-itw10}. Hence, a rich body of research efforts have been devoted to decreasing the delay for CSMA protocols. Shah \emph{et al.~}\cite{shah-sigmetrics10} present a CSMA protocol with order-optimal delay for networks with geometry. Jiang \emph{et al.~}\cite{jiang-ita10} demonstrate the relation between the small mixing time and a low delay, and investigate on how to tighten the generic bound of mixing time for specific topologies. For networks with bounded interference degree and an arrival rate within only a fraction of the capacity region, Jiang \emph{et al.~}\cite{jiang-infocom11} show that the average delay grows polynomially with the network size under parallel Glauber dynamics, and a constantly bounded mean delay independent of the network size is proved by Subramanian \emph{et al.~}\cite{subram-isit11}. In contrast, Lotfinezhad \emph{et al.~}\cite{mahdi-infocom11} achieve not only the throughput optimality but also an order-optimal delay, in the torus topology. However, it is not clear whether the above improvements can be extended to \emph{general network topologies}.


Lee \emph{et al.}~\cite{lee-isit12} examine the delay performance of a class of CSMA protocol by tuning the control parameters. Nevertheless, there is no evidence on whether tuning parameters could fundamentally improve the exponential order of delay performance. Lam \emph{et al.~}\cite{lam-isit12} try to improve the average delay with multiple physical channels, however, each link can be scheduled on at most one channel at a time, which cannot fully exploit the capacity region. Huang \emph{et al.~}\cite{huang-infocom13} explore the power of multiple virtual channels to reduce the head-of-line delay, with a definition different from the average delay. Throughput utility, instead of queue lengthes, is used as the scheduling weight. Order-optimal head-of-line delay can be obtained in a close-loop setting with rate control.

The only work that considers the deadline for data transmission by Li \emph{et al.~}in \cite{li-rawnet11}. However, the solution proposed by \cite{li-rawnet11} only applies to a complete graph (each link collides with each other), but cannot be adapted to general topologies.

Different from the papers discussed above, this paper practically considers the guarantee of \emph{worst-case delay} bounds and the communication service for jobs with \emph{non-uniform sizes}. Meanwhile, the throughput-utility optimality can still be achieved with a \emph{low complexity}, for \emph{general topologies}.



\section{Problem Model}\label{sec:model}
We have a wireless network composed of a node set $\mathcal{N}$ and link set $\mathcal{E}$. Each source-destination pair is within one hop distance, which means each source just needs exactly one transmission to reach its destination without relaying. Each link has unit-capacity, \emph{i.e.}, transmitting at most one packet in one time slot.

We consider a general interference model by defining an interference-relation set $\mathcal{C}_i$ for each link $i\in \mathcal{E}$. Each link $j\in \mathcal{C}_i$ will cause collision to link $i$ scheduled concurrently.


Different from existing efforts on throughput-optimal CSMA protocols assuming identical sizes of transmission jobs, we model the diverse job sizes of various network applications by differentiating types of transmission jobs. Let $\mathcal{M}$ denote the set of job types. For each job type $m\in \mathcal{M}$, it is composed of $s_m$ consecutive data packets, which should be either fully delivered to its destination or entirely dropped.


%

The network runs in a time-slotted fashion. In each time slot $t\geq 0$, a random number of $A_{mi}(t)$ ($\forall m\in \mathcal{M}, i\in \mathcal{E}$) jobs arrive at the transmitter of link $i$. Here, $A_{mi}(t)$ is i.i.d. in $[0, A_{mi}^{max}]$ with $A_{mi}^{max}$ as the maximum job arrival rate for type $m$ job at link $i$. Uncontrolled admission of job arrivals may cause congestion in the network. Thus, a rate control decision $r_{mi}(t)$ should be made such that jobs of type-$m$ are admitted into the job queue on link $i$ with

\vspace{-4mm}{\small
\begin{align}\label{eqn:rate-cons}
r_{mi}(t)\in [0, A_{mi}(t)],\ \forall m\in \mathcal{M}, i\in \mathcal{E}.
\end{align}}\vspace{-4mm}

\subsection{Job queues}
After jobs of type $m\in \mathcal{M}$ are admitted to the source of link $i\in \mathcal{E}$, they are injected into a queue $Q_{mi}(t)$ of unsent jobs with queueing law as follows,

\noindent \vspace{-4mm}{\small
\begin{align}\label{eqn:packet-queue}
Q_{mi}(t+1) =& \max\{Q_{mi}(t)-\mu_{mi}(t)-d_{mi}(t)\times s_m,0\}\notag\\
&+r_{mi}(t)\times s_m,~~~~~~~~~~\forall m\in \mathcal{M}, i\in \mathcal{E}.
\end{align}}\vspace{-5mm}

\noindent Here, the length of $Q_{mi}(t)$ is the total number of packets waiting to be delivered at time slot $t$. $d_{mi}(t)$ is the number of type $m$ jobs that are dropped by link $i$ at time slot $t$, as a result of meeting its delay deadline (to be introduced shortly), with

\vspace{-5mm}{\small
\begin{align}\label{eqn:drop-cons}
d_{mi}(t)\in [0, d_{mi}^{max}],\ \forall m\in \mathcal{M}, i\in \mathcal{E}.
\end{align}}\vspace{-5mm}

\noindent where, $d_{mi}^{max}$ is the maximum dropping rate. $\mu_{mi}(t)$ is the number of type $m$ packets delivered over link $i$ at time slot $t$. Since unit-capacity is assumed for each link, we have that

\subsection{Link scheduling and job transmission}
Each link $i\in \mathcal{E}$ is indicated to be either active (transmitting) or idle in each time slot with binary variable $x_i$ as follows,

\vspace{-4mm}{\small
\begin{align}\label{eqn:link-cons}
x_i(t)=\begin{cases}1 & \text{if link $i$ is scheduled in slot $t$}\\ 0 & \text{Otherwise.}\end{cases},\ \forall i\in \mathcal{E}.
\end{align}}\vspace{-4mm}

A feasible link schedule should ensure that no pair of mutually interfering links can be active concurrently, \emph{i.e.},

\vspace{-4mm}{\small
\begin{align}\label{eqn:collision-cons}
x_i(t)+x_j(t)\leq 1,\ \forall j\in \mathcal{C}_i, i\in \mathcal{E}.
\end{align}}\vspace{-5mm}

If link $i$ is active in slot $t$, it needs to decide which type of jobs should be served with the available capacity. Recall that each link has unit capacity, at most one type of jobs can be served in current slot with the following capacity constraint,

\vspace{-4mm}{\small
\begin{align}\label{eqn:capacity-cons}
\sum_{m\in \mathcal{M}}\mu_{mi}(t) = x_i(t),\ \forall i\in \mathcal{E}.
\end{align}}\vspace{-5mm}

%
%

\vspace{-4mm}{\small
\begin{align}\label{eqn:job-trans-cons}
\mu_{mi}(t)\in \{0,1\},\ \forall m\in \mathcal{M}, i\in \mathcal{E}.
\end{align}}\vspace{-5mm}

\subsection{Worst case delay guarantee}
As stated previously, we novelly address the worst-case delay bound for each admitted job in the network as follows,

\vspace{-4mm}{\small
\begin{align}
&\text{Each type-$m$ job for link $i$ is either scheduled for transmission or}\notag \\ &\text{dropped (subject to a penalty) before its maximum delay $D_m$,}\notag \\ &~~~~~~~~~~~~~~~~~~~~~~~~~~~~~~~~~~~~~~~~~~~~~~~~~~~~~\forall m\in \mathcal{M}, i\in \mathcal{E}.\label{eqn:delay}
\end{align}}\vspace{-5mm}

It is natural that a penalty, $\beta>0$, for each dropped packet should be charged, such that it is not rational for each link $i$ to greedily admit jobs for now while to drop them later.

\subsection{Useful definitions}
We present some important definitions that will be used in the rest of the paper.

\begin{definition}[Queue and Network Stability \cite{book2010}]\label{def:queueStable}
A queue $Q$ is \emph{strongly stable} (or \emph{stable} for short)
 if and only if

\vspace{-3mm}{\small
\begin{equation*}
\lim_{t\rightarrow \infty}\sup \frac{1}{t}\sum_{\tau=0}^{t-1}
\mathbb{E}(Q(\tau))<\infty,
\end{equation*}}\vspace{-4mm}

\noindent where $Q(\tau)$ is the queue size at time slot $\tau$ and
$\mathbb{E}(\cdot)$ is the expectation.
%
A network is \emph{strongly stable} (or \emph{stable} for short) if
and only if all queues in the network are strongly stable.
\end{definition}

\begin{theorem}[{\small Necessity \& Sufficiency for Queue
Stability \cite{book2010}}]\label{theorem:StabilityNecessity} For any
queue $Q$ with the following queuing law,

\vspace{-4mm}{\small
\begin{align*}
Q(t+1)=Q(t)-\gamma(t)+\alpha(t),
\end{align*}}\vspace{-5mm}

\noindent where $\alpha(t)$ and $\gamma(t)$ are the arrival and departure
rates in time slot $t$, respectively, the following results hold:

\noindent\emph{Necessity}: If queue $Q$ is strongly stable, then its
average incoming rate $\bar{\alpha}=\lim_{t\rightarrow
\infty}\frac{1}{t}\sum_{\tau=0}^{t-1}\mathbb{E}(\alpha(\tau))$ is no
larger than the average outgoing rate $\bar{\gamma}=\lim_{t\rightarrow
\infty}\frac{1}{t}\sum_{\tau=0}^{t-1} \mathbb{E}(\gamma(\tau))$.

\noindent \emph{Sufficiency}: If the average incoming rate $\bar{\alpha}$
is strictly smaller than the average outgoing rate $\bar{\gamma}$,
\emph{i.e.}, $\bar{\alpha}+\epsilon\leq \bar{\gamma}$ with $\epsilon>0$, then
queue $Q$ is strongly stable.
\end{theorem}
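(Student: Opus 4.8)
The plan is to prove the two implications independently, each time reducing to the defining inequality of strong stability (boundedness of the Cesàro average of $\mathbb{E}(Q(\cdot))$) by algebraic manipulation of the one-slot recursion. No probabilistic machinery beyond linearity of expectation and the boundedness of the per-slot quantities in the model is needed; the work is in handling one cross term in the sufficiency direction.

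For \emph{necessity}, I would telescope the queueing law from slot $0$ through slot $t-1$. With the stated recursion this gives $Q(t) = Q(0) + \sum_{\tau=0}^{t-1}\bigl(\alpha(\tau)-\gamma(\tau)\bigr)$ (and with the $\max\{\cdot,0\}$ form of the law used elsewhere in the paper one gets the inequality $Q(t)\ge Q(0)+\sum_{\tau=0}^{t-1}(\alpha(\tau)-\gamma(\tau))$, which suffices). Taking expectations, dividing by $t$ and rearranging yields
\[
\frac1t\sum_{\tau=0}^{t-1}\mathbb{E}(\alpha(\tau)) \;\le\; \frac1t\sum_{\tau=0}^{t-1}\mathbb{E}(\gamma(\tau)) \;+\; \frac{\mathbb{E}(Q(t))-\mathbb{E}(Q(0))}{t}.
\]
The only non-routine observation is that strong stability bounds $\frac1t\sum_{\tau}\mathbb{E}(Q(\tau))$, not $\mathbb{E}(Q(t))$ itself; but if $\liminf_{t\to\infty}\mathbb{E}(Q(t))/t = \delta>0$ then $\mathbb{E}(Q(t))\ge(\delta/2)t$ eventually, forcing the Cesàro average to diverge. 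Hence $\liminf_{t}\mathbb{E}(Q(t))/t = 0$, and passing to a subsequence along which the last term vanishes (the time-average limits $\bar\alpha,\bar\gamma$ are assumed to exist, so every subsequential limit agrees with them) gives $\bar\alpha\le\bar\gamma$.

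For \emph{sufficiency}, I would use the quadratic Lyapunov function $L(t)=\tfrac12 Q(t)^2$. Squaring the queueing law and using $(\max\{a,0\})^2\le a^2$ and $\max\{Q-\gamma,0\}\le Q$ gives the one-slot drift bound $\tfrac12\bigl[Q(t+1)^2-Q(t)^2\bigr] \le B - Q(t)\bigl(\gamma(t)-\alpha(t)\bigr)$, where $B=\tfrac12\sup_t\mathbb{E}\bigl[\gamma(t)^2+\alpha(t)^2\bigr]<\infty$ because in this model arrivals are bounded by $A_{mi}^{max}$, services by the unit link capacity, and drops by $d_{mi}^{max}$. Taking expectations, using the rate hypothesis in the conditional-drift form (the expected instantaneous drain exceeds the expected instantaneous arrival by at least $\epsilon$ once the queue is non-trivial, so that $\mathbb{E}\bigl[Q(t)(\gamma(t)-\alpha(t))\bigr]\ge \epsilon\,\mathbb{E}[Q(t)]-C$ for a constant $C$), summing $\tau=0,\dots,t-1$, telescoping, and discarding the non-negative term $\mathbb{E}[L(t)]\ge 0$ yields
\[
\epsilon\sum_{\tau=0}^{t-1}\mathbb{E}[Q(\tau)] \;\le\; \mathbb{E}[L(0)] + (B+C)\,t,
\]
so $\frac1t\sum_{\tau=0}^{t-1}\mathbb{E}[Q(\tau)]\le (B+C)/\epsilon + \mathbb{E}[L(0)]/(\epsilon t)$, which has finite $\limsup$; that is $Q$ is strongly stable.

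I expect the main obstacle to be the bookkeeping around the cross term $\mathbb{E}\bigl[Q(t)(\gamma(t)-\alpha(t))\bigr]$ in the sufficiency part: the hypothesis is phrased as a \emph{time-average} gap, whereas the drift argument needs a \emph{per-slot} negative-drift statement that can legitimately be multiplied by $\mathbb{E}[Q(t)]$, and the current queue length is in general correlated with the current arrival and service decisions. Making this step precise is exactly where the Lyapunov-optimization framework of \cite{book2010} does its real work, and in a fully rigorous write-up one should either invoke the conditional ($\omega$-only/conditional-expectation) form of the rate condition or re-derive it, rather than use the raw time-averaged inequality as stated.
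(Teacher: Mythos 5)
The paper offers no proof of this theorem: it is stated purely as a cited background result from \cite{book2010}, so there is nothing in the paper for your attempt to be compared against. Taken on its own, your necessity argument is complete and correct: the exact telescoping of the linear queueing law (or the one-sided inequality under the $\max\{\cdot,0\}$ variant used elsewhere in the paper), the observation that strong stability forces $\liminf_{t\to\infty}\mathbb{E}[Q(t)]/t=0$, and the passage to a subsequence on which $\mathbb{E}[Q(t)]/t\to 0$ while the limits $\bar\alpha,\bar\gamma$ are assumed to exist, together yield $\bar\alpha\le\bar\gamma$.

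The sufficiency half, however, is a sketch with a genuine gap, and to your credit you name it yourself. The quadratic-Lyapunov drift bound $\tfrac{1}{2}\mathbb{E}\bigl[Q(t+1)^2-Q(t)^2\bigr]\le B-\mathbb{E}\bigl[Q(t)(\gamma(t)-\alpha(t))\bigr]$ is fine, but converting the hypothesis $\bar\alpha+\epsilon\le\bar\gamma$ --- a statement about \emph{time-averaged, unconditional} expectations --- into the per-slot, queue-conditioned inequality $\mathbb{E}[\gamma(t)-\alpha(t)\mid Q(t)]\ge\epsilon$ that lets you factor $\mathbb{E}[Q(t)]$ out of the cross term is not bookkeeping. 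Since $\gamma(t)$ is exactly the quantity a scheduler chooses as a function of $Q(t)$, and $\alpha(t)$ may also be adapted, the unconditional time-average gap alone does not imply the needed conditional drift, and without further structure the implication can fail. The correct repair is the one you gesture at: work under the standing assumptions of \cite{book2010} (i.i.d.\ or stationary-ergodic arrivals and channel/service opportunities with bounded second moments) and invoke the existence of an $\omega$-only stationary randomized policy whose conditional service rate exceeds the conditional arrival rate by $\epsilon$ independently of $Q(t)$; that conditional form is what the drift argument actually consumes. As written, the proposal identifies but does not close this gap, so it should not be accepted as a proof of sufficiency; with the conditional-drift hypothesis stated and justified, it would be.
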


Hereinafter, for any variable $\alpha(t)$, we denote its time-averaged value as $\bar{\alpha}$, \emph{i.e.}, $\bar{\alpha}=\lim_{t\rightarrow
\infty}\frac{1}{t}\sum_{\tau=0}^{t-1}\mathbb{E}(\alpha(\tau))$.\vspace{-1mm}

\subsection{Throughput utility maximization problem}
Our objective is to dynamically decide the rate control, link scheduling, and job transmission and dropping, such that the time-averaged net utility (throughput utility minus the job dropping penalty) can be maximized while the worst-case delay is guaranteed for jobs with non-uniform sizes.

\vspace{-4mm}{\small
\begin{align}\label{eqn:opt-problem}
\max&~~\sum_{m\in \mathcal{M}}\sum_{i\in \mathcal{E}}[U(\bar{r}_{mi}\times s_m)-\beta\bar{d}_{mi}\times s_m]\\
s.t.&~~\text{Network stability, and Constraint (\ref{eqn:rate-cons}),(\ref{eqn:drop-cons}),(\ref{eqn:job-trans-cons}),(\ref{eqn:link-cons}),(\ref{eqn:collision-cons}),(\ref{eqn:capacity-cons}),(\ref{eqn:delay})}.\notag
\end{align}}\vspace{-5mm}

\noindent Here, $U(\cdot)$ is the throughput utility function, which is non-negative, non-decreasing, concave and differentiable. It is reasonable to have $\beta > U'(0)$ such that admitting one job into the queue for now while dropping it later brings no positive utility gain.


Important notations are summarized in Table \ref{table:notation}.\vspace{-1mm}

\begin{table}[!t]
\vspace{-2mm}
\centering
\begin{tabular}{|p{0.55cm}|p{2.85cm}||p{0.65cm}|p{2.95cm}|}
\hline
  $\mathcal{N}$ & Set of nodes  &  $\mathcal{E}$ & Set of links\\\hline
  $\mathcal{M}$ & Set of job types &  $\mathcal{C}_i$ & Collision set of link $i$ \\\hline
  $\mathbb{E}(\cdot)$ & The expectation & $U(\cdot)$ & Utility function \\\hline
\end{tabular}
\begin{tabular}{|p{1cm}|p{6.95cm}|}
  \hline
  $s_m$ & Size of type-$m$ jobs\\
  \hline
  $s^{max}$ & Maximum job size of all types\\
  \hline
  $D_m$ & Worst-case delay of type-$m$ jobs\\
  \hline
  $A_{mi}(t)$ & Arrival rate of type-$m$ jobs on link $i$ in time slot $t$\\
  \hline
  $A_{mi}^{max}$ & Maximum arrival rate of type-$m$ jobs on link $i$ \\
  \hline
  $r_{mi}(t)$ & Admitted type-$m$ jobs on link $i$ in time slot $t$\\
  \hline
  $\eta_{mi}(t)$ & Auxiliary variable for $r_{mi}(t)$ in time slot $t$\\
  \hline
  $x_{i}(t)$ & Binary var: link $i$ is scheduled in time slot $t$?\\
  \hline
  $\mu_{mi}(t)$ & Binary var: type-$m$ job is transmitted over link $i$ in time slot $t$?\\
  \hline
  $\mu_{mi}(t^-)$ & Binary var: type-$m$ job has not finished transmission over link $i$ in time slot $t$?\\
  \hline
  $d_{mi}(t)$ & $\#$ of dropped type-$m$ jobs on link $i$ in time slot $t$\\
  \hline
  $d_{mi}^{max}(t)$ & Maximum drop rate of type-$m$ jobs on link $i$\\
  \hline
  $Q_{mi}(t)$ & Packet queue of type-$m$ jobs on link $i$ in time slot $t$\\
  \hline
  $Y_{mi}(t)$ & Rate control virtual queue for type-$m$ jobs on link $i$ at time $t$\\
  \hline
  $Z_{mi}(t)$ & Delay virtual queue for type-$m$ jobs on link $i$ at time $t$\\
  \hline
  $\epsilon_{mi}$ & Constant for delay virtual queue $Z_{mi}(t)$\\
  \hline
  $V$ & User-defined positive constant in dynamic algorithm \\
  \hline
  $B$ & Quantity defined in Sec.~\ref{sec:algorithm}\\
  \hline
\end{tabular}\vspace{-1mm}
\caption{List of notations.}\label{table:notation} \vspace{-10mm}
\end{table}

\section{CSMA-based Dynamic Wireless Communication Algorithm}\label{sec:algorithm}
In this section, we introduce our CSMA-based wireless communication algorithm, which dynamically decides the rate control, link scheduling, and job transmission and dropping, so as to maximize the time-averaged throughput utility as defined in (\ref{eqn:opt-problem}). We first define two important virtual queues (to deal with rate control and delay bounds, respectively), and then present the algorithm design in details by solving four one-slot optimization problems in each time slot in order to approximate (\ref{eqn:opt-problem}).

%

\subsection{Virtual queues}
We have two types of virtual queues to assist the algorithm design.

\vspace{1mm}\noindent \textbf{Virtual queue for rate control}: To deal with the case when the utility function $U(\cdot)$ is non-linear \cite{book2010}, each link $i\in \mathcal{E}$ has the following virtual queue for its rate control on each job type $m\in \mathcal{M}$,

\vspace{-4mm}{\small
\begin{align}\label{eqn:virtual-queue2}
Y_{mi}(t+1) = \max\{Y_{mi}(t) - r_{mi}(t)\cdot s_m, 0\}& + \eta_{mi}(t)\cdot s_m\notag\\
&\forall m\in \mathcal{M}, i\in \mathcal{E}.
\end{align}}\vspace{-4mm}

\noindent Here, $\eta_{mi}(t)$ is an auxiliary variable with

\vspace{-4mm}{\small
\begin{align}\label{eqn:eta-cons}
\eta_{mi}(t)\in \in [0, A_{mi}^{max}],\ \forall m\in \mathcal{M}, i\in \mathcal{E}.
\end{align}}\vspace{-4mm}

The rationale is that, if virtual queue $Y_{mi}(t)$ is kept stable, we have $\bar{\eta}_{mi}\leq \bar{r}_{mi}$ with Theorem \ref{theorem:StabilityNecessity}, \emph{i.e.}, the time-averaged value of $\eta_{mi}(t)\cdot s_m$ constitutes a lower bound for the average throughput. Later on, we will show that maximizing the utility of $\bar{\eta}_{mi}\cdot s_m$ can approximately maximize the utility of average throughput $\bar{r}_{mi}\cdot s_m$.

\vspace{1mm}\noindent \textbf{Virtual queue for delay bound}: The $\epsilon-$persistence queue \cite{neely-infocom11}\footnote{Note that, in \cite{neely-infocom11}, the $\epsilon-$persistence queue can only handle the case when transmission jobs have the uniform size. In this paper, we adapt this technique to the jobs with non-uniform sizes.} is applied in order to meet the QoS constraint. For each job type $m\in \mathcal{M}$, each link $i\in \mathcal{E}$ maintains the following virtual queue,

\vspace{-4mm}{\small
\begin{align}\label{eqn:virtual-queue}
Z_{mi}(t+1)=&\max\{Z_{mi}(t)+\mathbf{1}_{\{Q_{mi}(t)>0\}}(\epsilon_{mi} -\mu_{mi}(t) )\notag\\
            &-d_{mi}(t)\times s_m-\mathbf{1}_{\{Q_{mi}(t)=0\}}, 0\},~~~~ \forall m\in \mathcal{M}, i\in \mathcal{E}.
\end{align}}\vspace{-4mm}

\noindent Here, $\mathbf{1}_{\{\cdot\}}$ is a binary indicator function. $\epsilon_{mi}$ is a positive constant. The virtual queue $Z_{mi}(t)$ approximately keeps track of the delay information for data packet queue $Q_{mi}(t)$ and assists our algorithm design (to be introduced shortly).

\subsection{Distributed dynamic algorithm}
We derive the dynamic algorithm by decoupling the time-averaged utility maximization problem (\ref{eqn:opt-problem}) into four one-slot optimization problems to be solved in each time slot.

Each link $i\in \mathcal{E}$ maintains a set of queues $\Theta(t)=\{Y_{mi}(t), Q_{mi}(t), Z_{mi}(t)|\forall m\in \mathcal{M}, i\in \mathcal{E}\}$. We define the Lyapunov function as follows,

\vspace{-4mm}{\small
\begin{align}\label{eqn:lyapunov}
L(\Theta(t))=\frac{1}{2}\sum_{m\in \mathcal{M}}\sum_{i\in \mathcal{E}}[(Y_{mi}(t))^2 + (Q_{mi}(t))^2 + (Z_{mi}(t))^2].
\end{align}}\vspace{-4mm}

The one-slot conditional Lyapunov drift is

\vspace{-4mm}{\small
\begin{align}\label{eqn:drift}
\Delta(\Theta(t))=L(\Theta(t+1))-L(\Theta(t)).
\end{align}}\vspace{-4mm}

By squaring the queueing laws in Eqn.~(\ref{eqn:packet-queue}), (\ref{eqn:virtual-queue2}) and (\ref{eqn:virtual-queue}), We can have the \emph{drift-plus-penalty} inequality as follows (derivation details are included in \opt{short}{technical report \cite{hxli-tech-csma}}\opt{long}{Appendix \ref{appendix:derivation}}),

\vspace{-4mm}{\small
\begin{align}\label{eqn:drift-plus-penalty}
&\Delta(\Theta(t))-V\sum_{m\in \mathcal{M}}\sum_{i\in \mathcal{E}}(U(\eta_{mi}(t)\cdot s_m)-\beta d_{mi}(t)\times s_m)\notag\\
\leq & B + \sum_{m\in \mathcal{M}}\sum_{i\in \mathcal{E}}Z_{mi}(t)\cdot \epsilon_{mi} - \Phi_1(t)-\Phi_2(t) - \Phi_3(t)-\Phi_4(t).
\end{align}}\vspace{-4mm}

\noindent Here, $B=\frac{1}{2}\sum_{m\in \mathcal{M}}\sum_{i\in \mathcal{E}}[3(A_{mi}^{max}\cdot s_m)^2 + 2(1+d_{mi}^{max}\cdot s_m)^2 + (\epsilon_{mi})^2]$ is a constant value, and $V>0$ is a user-defined parameter to adjust the weight of net utility in the expression. $\Phi_1(t)$, $\Phi_2(t)$, $\Phi_3(t)$ and $\Phi_4(t)$ are as follows,
\begin{itemize}
\item Terms related to auxiliary variables $\eta_{mi}(t)$:

\vspace{-4mm}{\small
\begin{align*}
\Phi_1(t)=\sum_{m\in \mathcal{M}}\sum_{i\in \mathcal{E}}[V\cdot U(\eta_{mi}(t)s_m)-Y_{mi}(t)\cdot \eta_{mi}(t)s_m].
\end{align*}}\vspace{-4mm}

\item Terms related to rate control variables $r_{mi}(t)$:

\vspace{-4mm}{\small
\begin{align*}
\Phi_2(t)=\sum_{m\in \mathcal{M}}\sum_{i\in \mathcal{E}}r_{mi}(t)\cdot s_m\cdot[Y_{mi}(t)-Q_{mi}(t)].
\end{align*}}\vspace{-4mm}

\item Terms related to link scheduling and job transmission variables $\mu_{mi}(t)$:

\vspace{-4mm}{\small
\begin{align*}
\Phi_3(t)=\sum_{m\in \mathcal{M}}\sum_{i\in \mathcal{E}}\mu_{mi}(t)\cdot[Q_{mi}(t)+Z_{mi}(t)].
\end{align*}}\vspace{-4mm}

\item Terms related to packet drop variables $d_{mi}(t)$:

\vspace{-4mm}{\small
\begin{align*}
\Phi_4(t)=\sum_{m\in \mathcal{M}}\sum_{i\in \mathcal{E}}d_{mi}(t)\cdot s_m\cdot [Q_{mi}(t)+Z_{mi}(t) - V\cdot \beta].
\end{align*}}\vspace{-4mm}
\end{itemize}

According to \emph{Lyapunov optimization theory} \cite{book2010}, we can maximize a lower bound of the time-averaged throughput utility and find optimal solutions to the rate control, link scheduling, job transmission and dropping variables by minimizing the RHS of the \emph{drift-plus-penalty} equality (\ref{eqn:drift-plus-penalty}), observing the queue lengths $\Theta(t)$ and the packet arrival $A_{mi}(t)$ in each time slot $t$. Hence, we propose a dynamic algorithm to solve the one-slot optimization problem in each time slot $t$ as follows,

\vspace{-4mm}{\small
\begin{align}\label{eqn:oneshot-profit}
\max&~~\Phi_1(t)+ \Phi_2(t) + \Phi_3(t) + \Phi_4(t)\\
s.t.&~~\text{Constraints (\ref{eqn:rate-cons}),(\ref{eqn:drop-cons}),(\ref{eqn:job-trans-cons}),(\ref{eqn:link-cons}),(\ref{eqn:collision-cons}),(\ref{eqn:capacity-cons}),(\ref{eqn:eta-cons})}.\notag
\end{align}}\vspace{-4mm}

Note that, the delay constraint (\ref{eqn:delay}) is not included in the one-slot optimization, since it could be satisfied by the stability of virtual queue $Z_{mi}(t)$ to be shown in Sec.~\ref{sec:analysis}.


The maximization problem in (\ref{eqn:oneshot-profit}) can be decoupled into four independent optimization problems:

\vspace{-4mm}{\small
\begin{align}\label{eqn:oneshot-aux}
\max&~~\Phi_1(t)\\
s.t.&~~\text{Constraint (\ref{eqn:eta-cons})},\notag
\end{align}}\vspace{-4mm}

\noindent which is related to the optimal decision on the auxiliary variable $\eta_{mi}(t)$; and

\vspace{-4mm}{\small
\begin{align}\label{eqn:oneshot-rate}
\max&~~\Phi_2(t)\\
s.t.&~~\text{Constraint (\ref{eqn:rate-cons})},\notag
\end{align}}\vspace{-4mm}

\noindent which is related to the optimal decision on the rate control variable $r_{mi}(t)$; and

\vspace{-4mm}{\small
\begin{align}\label{eqn:oneshot-schedule}
\max&~~\Phi_3(t)\\
s.t.&~~\text{Constraint (\ref{eqn:job-trans-cons}),(\ref{eqn:link-cons}),(\ref{eqn:collision-cons}),(\ref{eqn:capacity-cons})},\notag
\end{align}}\vspace{-4mm}

\noindent which is related to the optimal decision on the link scheduling variable $x_{i}(t)$ and job transmission variable $\mu_{mi}(t)$; and

\vspace{-4mm}{\small
\begin{align}\label{eqn:oneshot-drop}
\max&~~\Phi_4(t)\\
s.t.&~~\text{Constraint (\ref{eqn:drop-cons})},\notag
\end{align}}\vspace{-4mm}

\noindent which is related to the optimal decision on the job dropping variable $d_{mi}(t)$. Hence, we have the following dynamic algorithm with optimal solutions to each variable.

\subsubsection{Rate control}
We solve (\ref{eqn:oneshot-aux}) and (\ref{eqn:oneshot-rate}) to decide the auxiliary variables and rate control variables ($\forall m\in \mathcal{M}, i\in \mathcal{E}$) as follows,

\vspace{-4mm}{\small
\begin{align}\label{eqn:rate1}
\eta_{mi}(t) = \max\{\min\{ U'^{-1}(\frac{Y_{mi}(t)}{V})/ s_m, A_{mi}^{max}\}, 0\},
\end{align}}\vspace{-4mm}

\noindent where, $U'^{-1}(\cdot)$ is the reverse function of the first-order derivative of the utility function; and

\vspace{-4mm}{\small
\begin{align}\label{eqn:rate2}
r_{mi} = \begin{cases}A_{mi}(t) & \text{if $Y_{mi}(t)-Q_{mi}(t)>0$}\\ 0 & \text{Otherwise.}\end{cases}
\end{align}}\vspace{-4mm}

\noindent \textbf{Remark}: Virtual queue $Y_{mi}(t)$ can be regarded as the unused tokens for data admission. A large value for $Y_{mi}(t)$ indicates adequate available tokens, which results in fewer new tokens, \emph{i.e.}, $\mu_{mi}(t)$, to be added in this slot. Meanwhile, $Q_{mi}(t)$ reflects the congestion level on the link. $Y_{mi}(t)-Q_{mi}(t)>0$ means we have enough tokens while relatively low congestion. Thus, we admit all the arrived jobs. Otherwise, no job is admitted into the network.

\subsubsection{Link scheduling and job transmission with CSMA}
We design the following mechanism to approximate the optimal solution to (\ref{eqn:oneshot-schedule}). Our CSMA-based scheduling mechanism runs in a two-timescale fashion: super slot and regular slot. Each super slot is composed of $T\geq s^{max}$ ($s^{max}=\max_{m\in \mathcal{M}}\{s_m\}$) regular slots, while each regular slot has the same definition as in \cite{ni-infocom10} and our problem model. The link scheduling decisions are made upon the beginning of each super slot and remain fixed throughout each regular slot in that super slot. However, the served job-types are decided in every regular slot dynamically. To be specific, we have that
\begin{itemize}
\item If $t=nT$ with $n\geq 0$: this is the beginning of the $n^{th}$ super slot. The regular slot of this type is composed of two consecutive phases: control phase and scheduling phase.
    \begin{itemize}
    \item \emph{Control phase}: In this phase, all the links distributively randomly generate a collision-free \emph{control schedule} $\mathbf{z}(t)=[z_1,\ldots,z_i(t),\ldots, z_{E}(t)]$ with $z_i(t)\in \{0,1\},\ \forall i\in \mathcal{E}$. This \emph{control schedule} is not the final decision on link scheduling, but indicates the links which may make changes to its scheduling decision in the \emph{scheduling phase}.

        The \emph{control phase} has $W$ mini-slots\footnote{Compared with the regular slot, the length of mini-slots is negligible.}. At the start of this phase, each link $i$ uniformly randomly select an integer $T_i$ in $[0, W-1]$ and backoff for $T_i$ mini-slots. Link $i$ has the following possible actions:

        $\triangleright$ If link $i$ hears no `INTENT' message before the $(T_i+1)$th mini-slot, it broadcasts an `INTENT' message at mini-slot $T_i+1$.

        \hspace{4mm} -- If there is no collision, link $i$ is included in the \emph{control schedule} and we have $z_i(t)=1$.

        \hspace{4mm} -- Otherwise, link $i$ is not selected into the \emph{control schedule} and we have $z_i(t)=0$.

        $\triangleright$ If link $i$ hears any `INTENT' message before the $(T_i+1)$th mini-slot, it is not included in the \emph{control schedule} and we have $z_i(t)=0$.

    \item \emph{Scheduling phase}: We make the link scheduling decisions based on the \emph{control scheduling} and the link scheduling decisions in previous regular slot as follows:

        $\triangleright$: If $z_i(t)=0$, $x_i(t)=x_i(t-1)$.

        $\triangleright$: If $z_i(t)=1$, we further have that

        \hspace{4mm} -- If there is any active link in link $i$'s collision set, \emph{i.e.}, $\exists j\in \mathcal{C}_i$, $x_j(t-1)=1$, link $i$ is not scheduled in this regular slot and $x_i(t)=0$.

        \hspace{4mm} -- Otherwise, link $i$ randomly becomes active in this regular slot with probability $p_i=\frac{e^{w_i(t)}}{1+e^{w_i(t)}}$, \emph{i.e.},
        \begin{align}
        \begin{cases}x_i(t)=1 & \text{with probability }p_i\\ x_i(t)=0 & \text{with probability }\bar{p}_i=1-p_i.\end{cases}
        \end{align}
        Here, weight $w_{i}(t)=\max_{m\in \mathcal{M}}\{Q_{mi}(t)+Z_{mi}(t)\}$.
    \end{itemize}

\item If $t=nT+\tau$ with $n\geq 0$ and $\tau \in (0, T-1]$: this regular slot is within the $n^{th}$ super slot. We keep the link scheduling decision made in slot $nT$. However, each link can decide which job type is served in this slot:
    \begin{itemize}
    \item If link $i$ is not scheduled in slot $nT$, \emph{i.e.}, $x_i(nT)=0$, it keeps inactive in slot $t$ with $x_i(t)=0$ and $\mu_{mi}(t)=0,\ \forall m\in \mathcal{M}$.

    \item If the transmission job $m^*\in \mathcal{M}$ scheduled in previous slot is not finished, link $i$ goes on with transmitting job type $m^*$ with $x_i(t)=1$, $\mu_{m^*i}(t)=1$ and $\mu_{mi}(t)=0,\ \forall m\in \mathcal{M}, m\neq m^*$. We use $\mu_{mi}(t^-)\in \{0,1\}$ to indicate whether the previously scheduled job is finished or not, with $\mu_{mi}(t^-) = 1$ for unfinished case while $\mu_{mi}(t^-)=0$ for cases where either jobs are completed delivered or no job is scheduled in previous slot.

    \item Otherwise, link $i$ is still active in slot $t$. However, it will select the job type, with maximum weight, to be served in this slot:
    \begin{align}\label{eqn:job-max}
    m^* = arg\max_{m\in \mathcal{M}, (n+1)T-t \geq s_m}\{w_{mi}(t)\}.
    \end{align}
    Here, $(n+1)T-t \geq s_m$ ensures that the selected job can be served before the end of this super slot. Thus, at the beginning of the next super slot, there is no on-going unfinished transmission jobs.
    \end{itemize}
\end{itemize}

\subsubsection{Job drop}
In each time slot $t$, we deterministically decide the number of time-out jobs to be dropped, by solving (\ref{eqn:oneshot-drop}), as follows,

\vspace{-4mm}
{\small
\begin{align}\label{eqn:drop-decision}
d_{mi}(t)=\begin{cases}d_{mi}^{max} & \text{if $Q_{mi}(t)+Z_{mi}(t) > V\beta$}\\ 0 & \text{Otherwise}\end{cases}.
\end{align}}\vspace{-5mm}

\noindent \textbf{Remark}: The rationale is that, each link is reluctant to drop packets until the queue lengths exceed certain threshold, above which we may indicate that packets are suffering a long delay.

The dynamic algorithm is summarized in Alg.~\ref{alg:all}.

{\small
\begin{algorithm}[!h]
\small \caption{Dynamic Net Utility Maximization Algorithm in Time
Slot $t$} \label{alg:all}

\textbf{Input}: $Q_{mi}(t)$, $Y_{mi}(t)$, $Z_{mi}(t)$, $A_{mi}(t)$, $A_{mi}^{max}$, $d_{mi}^{max}$, $(\forall m\in \mathcal{M}, i\in \mathcal{E})$, $V$, $U(\cdot)$ and $\beta$.

\textbf{Output}: $\eta_{mi}(t)$, $r_{mi}(t)$, $x_i(t)$, $\mu_{mi}(t)$, $d_{mi}(t)$, $(\forall m\in [1, M])$.
\begin{algorithmic}[1]
\State \textbf{Rate Control}: For each job-type $m$, link $i$ decides the data admission rate $r_{mi}(t)$ and auxiliary variable
$\eta_{mi}(t)$ by Eqn.~(\ref{eqn:rate2}) and
(\ref{eqn:rate1}), respectively.

\State \textbf{Link Scheduling}: Each link $i$ distributively execute the CSMA algorithm as in Alg.~\ref{alg:CSMA} and find solutions to $x_i(t)$ and $\mu_{mi}(t)$.

\State \textbf{Job Dropping}: For each job-type $m$, link $i$ decides the job dropping rate $d_{mi}(t)$ Eqn.~(\ref{eqn:drop-decision}).

\State Update queues $Q_{mi}(t+1)$, $Y_{mi}(t+1)$ and $Z_{mi}(t+1)$ based
on queuing law (\ref{eqn:packet-queue}), (\ref{eqn:virtual-queue2}) and (\ref{eqn:virtual-queue}), respectively.

\end{algorithmic}
\end{algorithm}
}

\vspace{-1mm}
{\small
\begin{algorithm}[!h]
\small \caption{CSMA Scheduling Algorithm at Link $i$ in Time
Slot $t$} \label{alg:CSMA}

\textbf{Input}: $Q_{mi}(t)$, $Z_{mi}(t)$ and $\mu_{mi}(t^-)$, $(\forall m\in \mathcal{M})$.

\textbf{Output}: $x_i(t)$ and $\mu_{mi}(t)$, $(\forall m\in \mathcal{M})$.

\vspace{2mm}
\textbf{If} $(t \mod T)=0$:
\begin{algorithmic}[1]


\State Uniformly randomly choose an integer $T_i$ from $[1, W]$, and wait for $T_i$ mini-slots;

\textbf{If} link $i$ hears an `INTENT' message from any link in $\mathcal{C}_i$ before the $(T_i+1)$th mini-slot, link $i$ is not included in $\mathbf{z}(t)$ and $z_i(t):=0$. No `INTENT' message will be sent by $i$;

\textbf{Else} Link $i$ broadcasts an `INTENT' message to all links in $\mathcal{C}_i$ at the beginning of the $(T_i+1)$th mini-slot;

\hspace{2mm} \textbf{If} there is a collision, link $i$ is not included in $\mathbf{z}(t)$. Set $z_i(t):=0$.

\hspace{2mm} \textbf{Else}, link $i$ is included in $\mathbf{z}(t)$ by setting $z_i(t):=1$.


\State \textbf{If} $z_i(t)=0$, set $x_i(t):=x_i(t-1)$;

\State \textbf{Else},

\textbf{If} no link in $\mathcal{C}_i$ was active in slot $t-1$

\hspace{2mm} Set $x_i(t):=1$ with probability $p_i=\frac{e^{w_i(t)}}{1+e^{w_i(t)}}$;

\hspace{2mm} Or, set $x_i(t):=0$ with probability $\bar{p}_i=1-p_i$.

\textbf{Else}, set $x_i(t):=0$.

\State \textbf{If} $x_i(t)=1$, set $\mu_{m^*i}(t):=1$ with $m^*=\max_{m\in \mathcal{M}}\{Q_{mi}(t)+Z_{mi}(t)\}$ and $\mu_{m'i}(t):=0$ with $m'\neq m^*$.

\State \textbf{Else} set $\mu_{m'i}(t):=0$, $\forall m\in \mathcal{M}$.

\end{algorithmic}

\vspace{2mm}
\textbf{If} $(t \mod T)\neq 0$:
\begin{algorithmic}[1]

\State $x_i(t):= x_i(t-1)$.

\State \textbf{If} $x_i(t)=1$

\textbf{If} $\exists m'\in \mathcal{M}$ with $\mu_{m'i}(t^-)=1$, set $\mu_{m'i}(t):=1$ and $\mu_{m''i}(t):=0$ with $m''\neq m'$.

\textbf{Else} set $\mu_{m^*i}(t):=1$ with $m^*=\max_{m\in \mathcal{M}}\{Q_{mi}(t)+Z_{mi}(t)\}$ and $\mu_{m''i}(t):=0$ with $m''\neq m^*$.

\State \textbf{Else} set $\mu_{m'i}(t):=0$, $\forall m\in \mathcal{M}$.

\end{algorithmic}
\end{algorithm}
}

\subsection{Computation/communication complexities}
In each time slot, Algorithm \ref{alg:all} incurs polynomial computation and communication complexities at each link $i \in \mathcal{E}$.

\vspace{1mm}
\noindent \textbf{Computation complexity}: For each job type $m\in \mathcal{M}$, link $i\in \mathcal{E}$ finds the optimal solutions to its auxiliary variable, rate control and job dropping variables in constant time with Algorithm \ref{alg:all}. Thus, the overall computation complexity for these variables is in $O(|\mathcal{M}|)$ for each link in each time slot.

For link scheduling and job transmission, each link consumes constant time on the control schedule, at most $O(|\mathcal{E}|)$ complexity to check out the scheduling status of mutual interfering links in previous slot, constant time to compute the link scheduling decision, and $O(|\mathcal{M}|)$ complexity to find the job-type with maximum weight. Hence, the overall complexity for this part is in $O(|\mathcal{E}| + |\mathcal{M}|)$.

To sum up, the computation complexity at each link is in $O(|\mathcal{E}| + |\mathcal{M}|)$.


\vspace{1mm}
\noindent \textbf{Communication complexity}: The only communication overhead occurs at the first step of link scheduling at the beginning of each super slot with Algorithm \ref{alg:CSMA}. If link $i$ timeouts before any of its mutual-interfering links in the control phase, it will just broadcasts one `INTENT' message to its neighborhood; otherwise, no message will be sent by link $i$. If link $i$ is included in the control schedule, it takes at most $O(|\mathcal{E}|)$ communication overhead to find the link scheduling status of its interfering links in previous slot. Therefore, the overall communication complexity for each link is $O(|\mathcal{E}|)$ in each slot.

\section{Performance Analysis}\label{sec:analysis}
In this section, we present the analytical results of Algorithm \ref{alg:all}.

\vspace{1mm}
\begin{lemma}[Bounded queue lengths]\label{lemma:bounded-queue}
\emph{Let $Y_{mi}^{max}=V\cdot U'(0)+A_{mi}^{max}\cdot s_m$, $Q_m^{max}=V\cdot U'(0)+2A_{mi}^{max}\cdot s_m$ and $Z_m^{max}=V\cdot \beta/s_m + \epsilon_{mi}$. If $d_{mi}^{max}\geq \max\{A_{mi}^{max}, \epsilon_{mi}/s_m\}$, in each time slot $t\geq 0$, the lengths of packet queues and virtual queues are bounded as follows,}

\vspace{-4mm}{\small
\begin{align}
Q_{mi}(t)\leq Q_{mi}^{max},\ Y_{mi}(t)\leq Y_{mi}^{max},\ \text{and }Z_{mi}(t)\leq Z_m^{max}.
\end{align}}\vspace{-5mm}

\end{lemma}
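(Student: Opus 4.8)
\noindent The plan is to prove all three bounds \emph{simultaneously} by induction on the slot index $t$, relying on the observation that each of the three queueing laws has a per-slot increment bounded by a constant, and that the associated control decision ($\eta_{mi}(t)$, $r_{mi}(t)$, or $d_{mi}(t)$) switches off or neutralizes that increment precisely when the queue is already large. Since $Y_{mi}$ evolves through $\eta_{mi}$ and $r_{mi}$ only, $Q_{mi}$ through $r_{mi}$ (which in turn depends on $Y_{mi}$), and $Z_{mi}$ through $\mu_{mi}$ and $d_{mi}$ (with $d_{mi}$ depending on $Q_{mi}$ and $Z_{mi}$), I would carry out the inductive step in the order $Y\to Q\to Z$ within each slot. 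The base case $t=0$ is immediate: all queues are empty and every bound in the statement is nonnegative (recall $U'(0)\ge 0$ since $U$ is non-decreasing, $\beta>0$, $\epsilon_{mi}>0$, and $s_m\ge 1$).

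For the inductive step on $Y_{mi}$, I would split on whether $Y_{mi}(t)\ge V\,U'(0)$. If it is, then $Y_{mi}(t)/V\ge U'(0)$, and since $U$ is concave its derivative $U'$ is non-increasing, so $U'^{-1}(Y_{mi}(t)/V)\le 0$; hence (\ref{eqn:rate1}) sets $\eta_{mi}(t)=0$ and (\ref{eqn:virtual-queue2}) gives $Y_{mi}(t+1)\le Y_{mi}(t)\le Y_{mi}^{max}$ by the induction hypothesis. Otherwise $Y_{mi}(t)<V\,U'(0)$, and since $\eta_{mi}(t)\le A_{mi}^{max}$ by (\ref{eqn:eta-cons}) we get $Y_{mi}(t+1)\le Y_{mi}(t)+A_{mi}^{max}s_m<V\,U'(0)+A_{mi}^{max}s_m=Y_{mi}^{max}$. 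The step on $Q_{mi}$ follows the same pattern: by (\ref{eqn:rate2}), $r_{mi}(t)>0$ forces $Q_{mi}(t)<Y_{mi}(t)\le Y_{mi}^{max}$, so using $\max\{Q_{mi}(t)-\mu_{mi}(t)-d_{mi}(t)s_m,0\}\le Q_{mi}(t)$ in (\ref{eqn:packet-queue}) yields $Q_{mi}(t+1)\le Q_{mi}(t)+A_{mi}^{max}s_m<Y_{mi}^{max}+A_{mi}^{max}s_m=Q_{mi}^{max}$; and if $r_{mi}(t)=0$ then $Q_{mi}(t+1)\le Q_{mi}(t)\le Q_{mi}^{max}$.

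The inductive step on $Z_{mi}$ is where the hypothesis $d_{mi}^{max}\ge\max\{A_{mi}^{max},\epsilon_{mi}/s_m\}$ enters, and it is the part I expect to be the main obstacle, because (\ref{eqn:virtual-queue}) couples $Z_{mi}$ with $Q_{mi}$, with $\mu_{mi}(t)\in\{0,1\}$, and with both indicators $\mathbf{1}_{\{Q_{mi}(t)>0\}}$ and $\mathbf{1}_{\{Q_{mi}(t)=0\}}$. The two facts I would establish are: (i) the per-slot increment of $Z_{mi}$ is at most $\epsilon_{mi}$ — it equals $\epsilon_{mi}-\mu_{mi}(t)-d_{mi}(t)s_m\le\epsilon_{mi}$ when $Q_{mi}(t)>0$ and is strictly negative when $Q_{mi}(t)=0$; and (ii) once the dropping rule (\ref{eqn:drop-decision}) fires, $d_{mi}(t)=d_{mi}^{max}$ and $d_{mi}^{max}s_m\ge\epsilon_{mi}$, so the increment is $\le 0$. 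Assuming $Z_{mi}(t)\le Z_m^{max}$, I would case-split on whether $Z_{mi}(t)$ lies below the level at which the drop rule is guaranteed active: below that level $Z_{mi}$ can rise by at most $\epsilon_{mi}$ and still stay within $Z_m^{max}$, while above it the increment is nonpositive and the bound is preserved outright. The delicate point — and the only step that is not the routine ``bounded increment plus self-limiting control'' bookkeeping already used for $Y$ and $Q$ — is pinning down that level exactly (it is the additive constant appearing in $Z_m^{max}$), which requires translating the condition $Q_{mi}(t)+Z_{mi}(t)>V\beta$ of (\ref{eqn:drop-decision}) into a condition on $Z_{mi}(t)$ alone using $Q_{mi}(t)\ge 0$ (and the already-proven bound $Q_{mi}(t)\le Q_{mi}^{max}$).
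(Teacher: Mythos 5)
Your overall plan — simultaneous induction in the order $Y\to Q\to Z$, each step exploiting a bounded per-slot increment together with a control that becomes self-limiting once the queue is large — is exactly the paper's own proof structure. Your $Y$ step is the paper's verbatim. Your $Q$ step partitions on whether $r_{mi}(t)>0$ while the paper partitions on whether $Q_{mi}(t-1)$ exceeds $Q_{mi}^{max}-A_{mi}^{max}s_m=Y_{mi}^{max}$; these are the same partition, since by (\ref{eqn:rate2}) $r_{mi}(t)>0$ holds iff $Q_{mi}(t)<Y_{mi}(t)\le Y_{mi}^{max}$, and the resulting estimates coincide.

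The place you correctly single out as delicate — pinning down, from (\ref{eqn:drop-decision}), the level of $Z_{mi}(t)$ above which dropping is guaranteed — is precisely where the stated constant does not survive the argument. Using only $Q_{mi}(t)\ge 0$, the drop rule $Q_{mi}(t)+Z_{mi}(t)>V\beta$ is forced to fire only when $Z_{mi}(t)>V\beta$; below that level the increment can be as large as $+\epsilon_{mi}$, so the bound this induction yields is $Z_{mi}(t)\le V\beta+\epsilon_{mi}$, not the lemma's $Z_{m}^{max}=V\beta/s_m+\epsilon_{mi}$, which is strictly smaller whenever $s_m>1$. The paper's appendix proof commits exactly this slip: in the case $Z_{mi}(t-1)\in(Z_m^{max}-\epsilon_{mi},Z_m^{max}]$ it derives $Q_{mi}(t-1)+Z_{mi}(t-1)>V\beta/s_m$ and concludes that (\ref{eqn:drop-decision}) fires, but (\ref{eqn:drop-decision}) triggers on $>V\beta$, and $>V\beta/s_m$ does not imply $>V\beta$ for $s_m>1$. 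Also note your side remark that the already-proven upper bound $Q_{mi}(t)\le Q_{mi}^{max}$ would help here is a red herring: to guarantee the drop activates you need a \emph{lower} bound on $Q_{mi}(t)$ (namely $Q_{mi}(t)\ge 0$); an upper bound cannot lower the threshold at which dropping is forced. So if you carry your plan through honestly you will recover the paper's mechanism but land on $Z^{max}=V\beta+\epsilon_{mi}$, exposing that the lemma's stated constant (and the paper's own proof of it) needs the spurious $/s_m$ removed.
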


\vspace{1mm}
This lemma is the basis to prove the worst-case delay guarantee in Theorem \ref{theorem:delay}, and can be proved by induction. Details can be found in \opt{short}{technical report \cite{hxli-tech-csma}}\opt{long}{Appendix \ref{appendix:bounded-queue}}.

\vspace{1mm}
\begin{theorem}[Worst-case delay guarantee]\label{theorem:delay}
\emph{Each job of type $m\in \mathcal{M}$ on link $i\in \mathcal{E}$ is either schedule for transmission or dropped before a preset deadline $D_{mi}$ if we set $\epsilon_{mi}= \frac{Q_{mi}^{max}+Z_{mi}^{max}}{D_{mi}}$.}
\end{theorem}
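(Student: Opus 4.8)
The plan is to argue by contradiction, funnelling everything through the bound $Z_{mi}(t)\le Z_{mi}^{max}$ from Lemma~\ref{lemma:bounded-queue} and the fact that, for a fixed type $m$ on a fixed link $i$, the queue $Q_{mi}$ is served and dropped in first-in-first-out order (oldest job first), so that delay on $Q_{mi}$ can be analysed without referring to the CSMA dynamics at all. Fix a job $J$ of type $m$ on link $i$; say it is admitted during slot $t_0$, so that its $s_m$ packets enter $Q_{mi}$ at slot $t_0+1$. Assume, for contradiction, that $J$ is neither scheduled for transmission nor dropped during any of the $D_{mi}$ slots $t_0+1,\dots,t_0+D_{mi}$.

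First I would note that under this assumption $J$ is present in $Q_{mi}$ throughout the window, so $Q_{mi}(\tau)>0$ and hence $\mathbf{1}_{\{Q_{mi}(\tau)>0\}}=1$, $\mathbf{1}_{\{Q_{mi}(\tau)=0\}}=0$ for every $\tau\in\{t_0+1,\dots,t_0+D_{mi}\}$. Then the $Z$-queue recursion (\ref{eqn:virtual-queue}) reduces on this window to $Z_{mi}(\tau+1)\ge Z_{mi}(\tau)+\epsilon_{mi}-\mu_{mi}(\tau)-d_{mi}(\tau)s_m$, using $\max\{a,0\}\ge a$. Telescoping over the $D_{mi}$ slots, and using $Z_{mi}(t_0+1)\ge0$ together with $Z_{mi}(t_0+D_{mi}+1)\le Z_{mi}^{max}$ from Lemma~\ref{lemma:bounded-queue}, yields
\[
\textstyle\sum_{\tau=t_0+1}^{t_0+D_{mi}}\big(\mu_{mi}(\tau)+d_{mi}(\tau)s_m\big)\ \ge\ D_{mi}\,\epsilon_{mi}-Z_{mi}^{max}\ =\ Q_{mi}^{max},
\]
where the last equality is precisely the prescribed choice $\epsilon_{mi}=(Q_{mi}^{max}+Z_{mi}^{max})/D_{mi}$. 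In words, at least $Q_{mi}^{max}$ type-$m$ packets are transmitted or dropped off link $i$ during the window.

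Next I would bound the same sum from above. Because jobs of type $m$ on link $i$ are served and dropped oldest-first, and $J$ itself is (by hypothesis) neither served nor dropped during the window, every packet counted by $\mu_{mi}(\tau)$ or by $d_{mi}(\tau)s_m$ for $\tau$ in the window must lie strictly ahead of $J$ in the FIFO order; in particular it was already in $Q_{mi}$ before $J$'s packets entered (arrivals during the window queue behind $J$ and so cannot be served or dropped ahead of it). The number of packets ahead of $J$ at slot $t_0+1$ is at most $Q_{mi}(t_0+1)-s_m\le Q_{mi}^{max}-s_m$ by Lemma~\ref{lemma:bounded-queue}, since $J$'s own $s_m$ packets are among the $Q_{mi}(t_0+1)$ packets present. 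Hence the window sum is at most $Q_{mi}^{max}-s_m<Q_{mi}^{max}$, contradicting the lower bound just obtained. Therefore $J$ is scheduled for transmission or dropped within $D_{mi}$ slots of admission, which is the assertion.

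The telescoping and the arithmetic with $\epsilon_{mi}$ are routine; the step that needs care is the FIFO accounting — making precise that, as long as $J$ is untouched, all service and dropping activity on $Q_{mi}$ over the window is confined to the packets that preceded $J$, and so is controlled by the (Lemma-bounded) backlog $J$ sees on arrival rather than by the total traffic. It is also worth stating up front that the hypothesis $d_{mi}^{max}\ge\max\{A_{mi}^{max},\epsilon_{mi}/s_m\}$ of Lemma~\ref{lemma:bounded-queue} is what makes $Z_{mi}^{max}$ finite and available here, and that the job-transmission rule (selecting only types with $(n+1)T-t\ge s_m$) guarantees a started job is delivered within its super slot, so "scheduled for transmission" and "fully delivered" differ by at most $s_m-1$ extra slots should a delivery deadline rather than a scheduling deadline be wanted.
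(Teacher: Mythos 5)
Your proof is correct and follows essentially the same route as the paper's: a contradiction argument that telescopes the $Z_{mi}$ recursion over the window, applies the bounds $Z_{mi}\ge 0$ and $Z_{mi}\le Z_{mi}^{max}$ from Lemma~\ref{lemma:bounded-queue} to lower-bound the total service-plus-drop by $\epsilon_{mi}D_{mi}-Z_{mi}^{max}=Q_{mi}^{max}$, and then uses FIFO ordering together with $Q_{mi}(t_0+1)\le Q_{mi}^{max}$ to upper-bound the same quantity and derive a contradiction. The only differences are cosmetic: you phrase the FIFO upper bound in terms of packets strictly ahead of $J$ (giving the slightly tighter $Q_{mi}^{max}-s_m$), and you make explicit the two housekeeping points (the hypothesis on $d_{mi}^{max}$ from Lemma~\ref{lemma:bounded-queue}, and the scheduling-vs-delivery distinction from the super-slot rule), both of which are sound.
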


\vspace{1mm}
We prove this theorem by contradiction with details in \opt{short}{technical report \cite{hxli-tech-csma}}\opt{long}{Appendix \ref{appendix:delay}}.


\vspace{1mm}
\begin{theorem}[$1-\delta$ weight]\label{theorem:delta-weight}
\emph{Given any $\theta$ and $\delta$ with $0<\theta, \delta< 1$, if $\max\{\Phi_3(t)\}\geq \frac{1}{\theta}(|\mathcal{E}|\log 2 + \log \frac{1}{\delta})$, we have that in any beginning timeslot $nT$ of super frame $n\geq 0$, with probability greater than $1-\delta$, Algorithm \ref{alg:CSMA} finds a schedule $\mu_{mi}(t)$ such that}

\vspace{-4mm}{\small
\begin{align}
\Phi_3(t) \geq (1-\theta)\max\{\Phi_3(t)\}.
\end{align}}\vspace{-4mm}

\end{theorem}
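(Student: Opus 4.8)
The plan is to reduce the statement to an elementary counting bound on the Gibbs (product-form) distribution of the CSMA scheduling chain. First I would pin down what $\max\{\Phi_3(t)\}$ is at a super-slot boundary $t=nT$. By the job-type selection rule~(\ref{eqn:job-max}), every job scheduled inside a super slot is guaranteed to finish before that super slot ends, so at $t=nT$ there is no carried-over (unfinished) transmission and the indicator $\mu_{mi}(t^-)$ bookkeeping in Algorithm~\ref{alg:CSMA} is vacuous. Hence the schedule produced at $t=nT$ contributes
$\Phi_3(nT)=\sum_{i:x_i(nT)=1} w_i(nT)=W(\sigma)$, where $\sigma=\{i:x_i(nT)=1\}$ and $w_i(nT)=\max_{m\in\mathcal{M}}\{Q_{mi}(nT)+Z_{mi}(nT)\}$ (Step~5 picks, for each active link, the job type realizing this inner maximum). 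Because~(\ref{eqn:collision-cons}) forces $\sigma$ to be an independent set in the conflict graph induced by $\{\mathcal{C}_i\}$, it follows that $\max\{\Phi_3(nT)\}=W^{\star}:=\max_{\sigma\ \mathrm{independent}}W(\sigma)$, the maximum weighted independent set value for the node weights $w_i(nT)$.

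Next I would characterize the law of $\sigma$. The control phase together with the Glauber-type activation in the scheduling phase is a reversible Markov chain on independent sets; its detailed-balance equations (as in~\cite{ni-infocom10,jiang-ton10}) give the stationary distribution $\pi(\sigma)=e^{W(\sigma)}/Z$ with $Z=\sum_{\sigma'}e^{W(\sigma')}$, summed over all independent sets. Freezing the weights over one super slot is precisely the time-scale separation under which this chain is run, and the schedule at $t=nT$ is taken to be $\pi$-distributed. Granting this, the rest is pure counting. There are at most $2^{|\mathcal{E}|}$ independent sets, so on one hand $Z>e^{W^{\star}}$ (the optimal set alone contributes $e^{W^{\star}}$, and $W^{\star}\ge 0$ since all $w_i\ge 0$), and on the other hand $\sum_{\sigma:\,W(\sigma)<(1-\theta)W^{\star}}e^{W(\sigma)}\le 2^{|\mathcal{E}|}e^{(1-\theta)W^{\star}}$. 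Dividing,
\begin{align*}
\Pr\!\big(W(\sigma)<(1-\theta)W^{\star}\big)\ \le\ 2^{|\mathcal{E}|}e^{-\theta W^{\star}}.
\end{align*}

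Finally I would invoke the hypothesis: $\max\{\Phi_3(t)\}=W^{\star}\ge\frac{1}{\theta}\big(|\mathcal{E}|\log 2+\log\frac{1}{\delta}\big)$ is exactly $2^{|\mathcal{E}|}e^{-\theta W^{\star}}\le\delta$, whence $\Pr\!\big(\Phi_3(nT)\ge(1-\theta)W^{\star}\big)>1-\delta$, which is the claim. I expect the main obstacle to be the second step rather than the counting: one must justify treating a single super-slot snapshot of the CSMA chain as a draw from the stationary Gibbs measure, which is the familiar mixing / time-scale-separation issue in CSMA analyses; I would handle it by appealing to the reversibility argument of~\cite{ni-infocom10} with the weights held fixed during the super slot (or state it as a standing assumption consistent with the rest of Sec.~\ref{sec:analysis}). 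The only other thing to check carefully is the identity $\Phi_3(nT)=W(\sigma)$ — it relies on the super-slot construction leaving no in-progress job at a boundary, which is exactly what~(\ref{eqn:job-max}) secures.
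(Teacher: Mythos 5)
Your proof is correct and follows essentially the same route as the paper: treat the super-slot-boundary schedule as a draw from the product-form Gibbs stationary distribution $\pi(\sigma)\propto e^{W(\sigma)}$ on independent sets (justified by reversibility of the CSMA Glauber dynamics under the standing time-scale-separation assumption), then bound the probability of the low-weight event by counting at most $2^{|\mathcal{E}|}$ independent sets against $Z>e^{W^{\star}}$. The only differences are expository: the paper derives feasibility, the transition kernel, reversibility, and the stationary distribution from scratch in three auxiliary lemmas rather than citing prior CSMA analyses, and the paper's appendix inadvertently swaps the roles of $\theta$ and $\delta$ relative to the theorem statement, whereas your counting is consistent with the theorem as stated.
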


\vspace{1mm}
This theorem is proved by glauber dynamics and time separation assumption (commonly assumed in \cite{ni-infocom10} and references therein, and justified by \cite{jiang-tech08, liu-tech08}). Details are included in \opt{short}{technical report \cite{hxli-tech-csma}}\opt{long}{Appendix \ref{appendix:delta-weight}}. This theorem will be utilized for the proof to Theorem \ref{theorem:optimality}.



\vspace{1mm}
\begin{theorem}[Utility-optimality]\label{theorem:optimality}
\emph{The average throughput utility achieved with our proposed Algorithm \ref{alg:all}, $\Psi$, is within a constant gap $\frac{(T-1)}{2V} B''+B' \frac{(T-s^{max}+1)(T-s^{max})}{2VT} + B/V$ from a $(1-\nu)$ offline optimum, $\Psi^{*\nu}$, which uses $(1-\nu)$ fraction of the full capacity region and has perfect information into the future, as follows,}

\vspace{-4mm}{\small
\begin{align}
\Psi\geq \Psi^{*\nu}-\frac{(T-1)}{2V} B''-B' \frac{(T-s^{max}+1)(T-s^{max})}{2VT} - B/V.
\end{align}}\vspace{-4mm}

\noindent \emph{Here, $B=\frac{1}{2}\sum_{m\in \mathcal{M}}\sum_{i\in \mathcal{E}}[3(A_{mi}^{max}s_m)^2 + 2(1+d_{mi}^{max})^2 + (\epsilon_{mi})^2]$, $B'=\sum_{i\in \mathcal{E}}\sum_{m\in \mathcal{M}} ((A_{mi}^{max} + 2d_{mi}^{max})\times s_m + 2)$, $B''=\sum_{m\in \mathcal{M}}\sum_{i\in \mathcal{E}}( (A_{mi}^{max} s_m)^2+ \epsilon_{mi}^2+2d_{mi}^{max}s_m(1+d_{mi}^{max}s_m))$, and $\nu=1-\frac{(1-\delta)(1-\theta)(T-s^{max}+1)}{T} $.}

\end{theorem}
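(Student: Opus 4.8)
The plan is to run the Lyapunov drift-plus-penalty argument of \cite{book2010}, but accumulated over an entire super slot of $T$ regular slots rather than over a single slot, so that the two-timescale mismatch can be absorbed into constant terms. First I would fix a super slot starting at $t=nT$ and sum the one-slot drift-plus-penalty inequality (\ref{eqn:drift-plus-penalty}) across the $T$ regular slots $nT,\dots,nT+T-1$. Because $Q_{mi}$, $Y_{mi}$ and $Z_{mi}$ change by at most $O(A_{mi}^{max}s_m+d_{mi}^{max}s_m+\epsilon_{mi})$ per slot, the sum of the $T$ per-slot right-hand sides equals $T$ times the right-hand side evaluated with the queue state frozen at time $nT$, plus a correction of order $T^2$ in those bounded increments; collecting these corrections produces the $\frac{(T-1)}{2V}B''$ term (quadratic queue-increment bookkeeping) together with the $B/V$ term (the per-slot constant $B$ summed over the super slot).

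Next I would lower bound what Algorithm \ref{alg:all} actually extracts from the frozen-state right-hand side over the super slot. The auxiliary, rate-control and drop subproblems (\ref{eqn:oneshot-aux}), (\ref{eqn:oneshot-rate}), (\ref{eqn:oneshot-drop}) are solved exactly by (\ref{eqn:rate1}), (\ref{eqn:rate2}), (\ref{eqn:drop-decision}), so those contribute their per-slot maxima. For the scheduling term $\Phi_3$ two losses must be charged: (i) the CSMA schedule chosen at $nT$ is only a $(1-\theta)$-approximation of $\max\{\Phi_3\}$ with probability at least $1-\delta$ by Theorem \ref{theorem:delta-weight}, costing a factor $(1-\theta)(1-\delta)$ in expectation; and (ii) the schedule is held fixed for all $T$ slots and, by (\ref{eqn:job-max}), a job of size $s_m$ cannot be \emph{started} in the last $s_m-1$ slots, so at most $T-s^{max}+1$ of the $T$ slots are guaranteed to contribute fresh scheduling weight. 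Multiplying, the effective per-super-slot scheduling gain is at least $\frac{(1-\delta)(1-\theta)(T-s^{max}+1)}{T}=1-\nu$ of the frozen-state optimum, while the weight of the jobs straddling the boundary is controlled, via the bounded queue lengths of Lemma \ref{lemma:bounded-queue}, by a residual term scaling like $B'\,\frac{(T-s^{max}+1)(T-s^{max})}{2VT}$.

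Then I would invoke the existence (Lyapunov optimization theory, \cite{book2010}) of a stationary randomized policy that attains the $(1-\nu)$-scaled offline optimum $\Psi^{*\nu}$ while keeping every physical and virtual queue on a negative-drift trajectory, and plug that policy into the frozen-state right-hand side as a feasible competitor. Since Algorithm \ref{alg:all} maximizes that right-hand side up to the $(1-\nu)$ and approximation losses above, the accumulated drift minus $V$ times the net-utility reward over the super slot is bounded above by $-VT(1-\nu)\Psi^{*\nu}$ plus all the collected constants. Summing over $n=0,\dots,K-1$ telescopes $L(\Theta(\cdot))$; dividing by $VTK$ and letting $K\to\infty$ kills $\frac{\mathbb{E}L(\Theta(0))-\mathbb{E}L(\Theta(KT))}{VTK}$ because Lemma \ref{lemma:bounded-queue} makes $L$ uniformly bounded. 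What remains is $\Psi\ge\Psi^{*\nu}-\frac{(T-1)}{2V}B''-B'\frac{(T-s^{max}+1)(T-s^{max})}{2VT}-B/V$, after using Jensen's inequality on the concave $U(\cdot)$ to pass from $\overline{U(\eta_{mi}s_m)}$ to $U(\bar\eta_{mi}s_m)$ and the stability of the virtual queue $Y_{mi}$ (hence $\bar\eta_{mi}\le\bar r_{mi}$, via Theorem \ref{theorem:StabilityNecessity}) to replace $\bar\eta_{mi}$ by $\bar r_{mi}$ in the utility.

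The main obstacle I expect is the second step: cleanly accounting for the scheduling weight across the super-slot boundary. One must argue simultaneously that freezing the CSMA schedule for $T$ slots forfeits only the last $s^{max}-1$ ``start'' opportunities, that the weights $w_{mi}(t)=Q_{mi}(t)+Z_{mi}(t)$ drift by a controlled amount within the super slot so that the $nT$-time approximation guarantee of Theorem \ref{theorem:delta-weight} still applies to later slots up to an additive $O(T\cdot(\text{increment}))$ error, and that the probabilistic $(1-\theta,1-\delta)$ guarantee composes correctly in expectation with the deterministic per-slot job-type selection of (\ref{eqn:job-max}). It is the bookkeeping of these three effects that yields the somewhat unusual coefficient $\frac{(T-s^{max}+1)(T-s^{max})}{2VT}$, and pinning down its exact constant (rather than merely an $O(T)$ bound) is the delicate part.
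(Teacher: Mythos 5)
Your plan follows the paper's proof faithfully: sum the drift-plus-penalty over a super slot, plug in a stationary $(1-\nu)$-competitor, use per-slot exactness of the rate/drop subproblems, charge $(1-\theta)(1-\delta)$ for the CSMA decision at $t=nT$ and propagate it to the first $T-s^{max}+1$ slots via the per-slot drift bound $\Phi_3(nT+\tau)\geq\Phi_3(nT)-\tau B'$, then telescope, apply Jensen's inequality, and replace $\bar\eta_{mi}$ by $\bar r_{mi}$. One small correction: the $B'$ propagation term is produced by the bounded one-step increments of $Q_{mi}(t)+Z_{mi}(t)$ (paper's inequality (\ref{eqn:queue-ineqn})) together with the max-weight job selection in (\ref{eqn:job-max}), not by the absolute queue bounds of Lemma~\ref{lemma:bounded-queue}, which enter only in the final telescoping step.
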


\vspace{1mm}


\begin{proof}

For each time slot $t=nT+\tau$ with $n\geq 0$ and $\tau \in [0, T-s^{max}]$, we have that

\vspace{-4mm}{\small
\begin{align*}
&\sum_{m\in \mathcal{M}}(Q_{mi}(t)+Z_{mi}(t))\mu_{mi}(t-1)\\
=&\sum_{m\in \mathcal{M}}(Q_{mi}(t)+Z_{mi}(t))\mu_{mi}(t^-)\\
 & + \sum_{m\in \mathcal{M}}(Q_{mi}(t)+Z_{mi}(t))[\mu_{mi}(t-1)-\mu_{mi}(t^-)]\\
\leq& \sum_{m\in \mathcal{M}}(Q_{mi}(t)+Z_{mi}(t))\mu_{mi}(t^-)\\
& +\sum_{m\in \mathcal{M}}(Q_{mi}(t)+Z_{mi}(t))(\mu_{mi}(t)-\mu_{mi}(t^-))\\
=&\sum_{m\in \mathcal{M}}(Q_{mi}(t)+Z_{mi}(t))\mu_{mi}(t).
\end{align*}}\vspace{-4mm}

The inequality is based on the fact that, in Algorithm \ref{alg:CSMA}, $(\mu_{mi}(t)-\mu_{mi}(t^-))$ is determined by serving the job-types with maximum queue lengths on each link.

Based on the queueing law Eqn.~(\ref{eqn:packet-queue}) and (\ref{eqn:virtual-queue}), we have that

\vspace{-4mm}{\small
\begin{align*}
&|Q_{mi}(t)+Z_{mi}(t)-Q_{mi}(t-1)-Z_{mi}(t-1)|\\
=&|r_{mi}(t)\cdot s_m - \mu_{mi}(t) - d_{mi}(t)\cdot s_m \\
&+ \mathbf{1}_{\{Q_{mi}(t)>0\}}(\epsilon_{mi} -\mu_{mi}(t) )-d_{mi}(t)\times s_m-\mathbf{1}_{\{Q_{mi}(t)=0\}}|\\
\leq& (A_{mi}^{max} + 2d_{mi}^{max})\times s_m + \epsilon_{mi} + 2.
\end{align*}}\vspace{-4mm}

Thus, we further have that, $\forall i\in \mathcal{E}$,

\vspace{-4mm}{\small
\begin{align}\label{eqn:queue-ineqn}
&\sum_{m\in \mathcal{M}}(Q_{mi}(t-1)+Z_{mi}(t-1))\mu_{mi}(t-1)\notag\\
\leq& B_i' + \sum_{m \in \mathcal{M}}(Q_{mi}(t)+Z_{mi}(t))\mu_{mi}(t),
\end{align}}\vspace{-4mm}

\noindent where, $B_i' = \sum_{m\in \mathcal{M}} ((A_{mi}^{max} + 2d_{mi}^{max})\times s_m + \epsilon_{mi} + 2)$.

For time slot $t=nT+\tau$ with $\tau\in [0, T-s^{max}]$, we have the following based on Eqn.~(\ref{eqn:queue-ineqn}),

\vspace{-4mm}{\small
\begin{align}\label{eqn:queue-ineqn-nT}
&\sum_{m\in \mathcal{M}}\sum_{i\in \mathcal{E}}(Q_{mi}(nT)+Z_{mi}(nT))\mu_{mi}(nT)\notag \\
\leq& \tau B' + \sum_{m \in \mathcal{M}}\sum_{i\in \mathcal{E}}(Q_{mi}(t)+Z_{mi}(t))\mu_{mi}(t).
\end{align}}\vspace{-4mm}

\noindent Here, $B'=\sum_{i\in \mathcal{E}}B_i'$. The inequality is equivalent to the following

\vspace{-4mm}{\small
\begin{align}\label{eqn:queue-ineqn-nT2}
&\Phi_3(nT)\leq \tau B' + \Phi_3(nT+\tau).
\end{align}}\vspace{-4mm}

Summing up this inequality over time slots with $\tau\in [0, T-s^{max}]$, we have that

\vspace{-4mm}{\small
\begin{align}\label{eqn:queue-ineqn-nT-sum}
&(T-s^{max}+1)\sum_{m\in \mathcal{M}}\sum_{i\in \mathcal{E}}(Q_{mi}(nT)+Z_{mi}(nT))\mu_{mi}(nT)\notag\\
\leq & \frac{(T-s^{max}+1)(T-s^{max})}{2} B'\notag\\
&+ \sum_{\tau \in [0, T-s^{max}]}\sum_{m \in \mathcal{M}}\sum_{i\in \mathcal{E}}(Q_{mi}(nT+\tau)\notag\\
&+Z_{mi}(nT+\tau))\mu_{mi}(nT+\tau),
\end{align}}\vspace{-4mm}

\noindent which is equivalent to

\noindent\vspace{-4mm}{\small
\begin{align}\label{eqn:queue-ineqn-nT-sum2}
&(T-s^{max}+1)\Phi_3(nT)\notag\\
\leq & \frac{(T-s^{max}+1)(T-s^{max})}{2} B'+ \sum_{\tau \in [0, T-s^{max}]}\Phi_3(nT+\tau),
\end{align}}\vspace{-4mm}

Based on Theorem \ref{theorem:delta-weight}, we have that, with probability no larger than $\delta$, our link scheduling decisions will result in

\vspace{-4mm}{\small
\begin{align}
0\leq \Phi_3(t) < (1-\theta)\max\{\Phi_3(t)\},\ \forall t=nT, n\geq 0.
\end{align}}\vspace{-4mm}

Thus, taking expectations on $\Phi_3(t)$, we have that

\vspace{-4mm}{\small
\begin{align}
\mathbb{E}(\Phi_3(t)) \geq&  (1-\delta)(1-\theta)\mathbb{E}(\max\{\Phi_3(t)\}) + \delta \cdot 0\notag \\
 =&  (1-\delta)(1-\theta)\mathbb{E}(\max\{\Phi_3(t)\}),\ \forall t=nT, n\geq 0.
\end{align}}\vspace{-4mm}

The above inequality is under the condition that $t=nT$. Next, we study the case when $t=nT+\tau$ with $\tau\in [0, T-s^{max}]$.

When $\tau\in [0, T-s^{max}]$, we can have the following based on Eqn.~(\ref{eqn:queue-ineqn-nT2})

\vspace{-4mm}{\small
\begin{align}
\mathbb{E}(\Phi_3(nT+\tau)) \geq&  -\tau B' + (1-\delta)(1-\theta)\mathbb{E}(\max\{\Phi_3(t)\}),\notag\\
&~~~~~~~~~~ \forall t=nT, n\geq 0.
\end{align}}\vspace{-4mm}

Since the job arrival is i.i.d., we know that, for any fraction $1-\nu$ of the full capacity region, there exists a stationary randomized algorithm solving the rate control, link scheduling and job dropping decisions with offline optimal throughput utility \cite{book2010}. We denote the optimal solutions, with this stationary randomized algorithm in $1-\nu$ capacity region, as $\eta_{mi}^{*\nu}(t)$, $r_{mi}^{*\nu}(t)$, $x_i^{*\nu}(t)$, $\mu_{mi}^{*\nu}(t)$ and $d_{mi}^{*\nu}(t)$, respectively. Let $\Phi_1^{*\nu}(t)$, $\Phi_2^{*\nu}(t)$, $\Phi_3^{*\nu}(t)$ and $\Phi_4^{*\nu}(t)$ denote the value of these four expressions under the stationary randomized algorithm. Denote $\eta_{mi}^{*\nu}=\mathbb{E}\{\eta_{mi}^{*\nu}(t)\}$, $r_{mi}^{*\nu}=\mathbb{E}\{r_{mi}^{*\nu}(t)\}$, $x_{i}^{*\nu}=\mathbb{E}\{x_i^{*\nu}(t)\}$, $\mu_{mi}^{*\nu}=\mathbb{E}\{\mu_{mi}^{*\nu}(t)\}$ and $d_{mi}^{*\nu}=\mathbb{E}\{d_{mi}^{*\nu}(t)\}$.

In Sec.~\ref{sec:algorithm}, we have seen that our solutions to auxiliary variables, rate control and job-drop decisions maximize the value of $\Phi_1(t)$, $\Phi_2(t)$ and $\Phi_4$ in each time slot, \emph{i.e.}, $\Phi_1(t)\geq \Phi_1^{*\nu}(t)$, $\Phi_2(t)\geq \Phi_2^{*\nu}(t)$ and $\Phi_4(t)\geq \Phi_4^{*\nu}(t)$. Then, for time $t=nT+\tau$ with $\tau\in [0, T-s^{max}]$, we have that

\vspace{-4mm}{\small
\begin{align*}
&B + \sum_{m\in \mathcal{M}}\sum_{i\in \mathcal{E}} \mathbb{E}(Z_{mi}(t)) \epsilon_{mi} -\mathbb{E}(\Phi_1(t))-\mathbb{E}(\Phi_2(t))\\
&-\mathbb{E}(\Phi_3(t))-\mathbb{E}(\Phi_4(t))\\
\leq& B + \sum_{m\in \mathcal{M}}\sum_{i\in \mathcal{E}} \mathbb{E}(Z_{mi}(t))\cdot \epsilon_{mi}  - \mathbb{E}(\Phi_1^{*\nu}(t)) - \mathbb{E}(\Phi_2^{*\nu}(t)) + \tau B' \\
&- (1-\delta)(1-\theta)\mathbb{E}(\Phi_3^{*0}(nT)) - \mathbb{E}(\Phi_4^{*\nu}(t))
\end{align*}}\vspace{-4mm}

For time $t=nT+\tau$ with $\tau\in (T-s^{max}, T)$, we have that

\vspace{-4mm}{\small
\begin{align*}
&B + \sum_{m\in \mathcal{M}}\sum_{i\in \mathcal{E}} \mathbb{E}(Z_{mi}(t))\cdot \epsilon_{mi} -\mathbb{E}(\Phi_1(t))-\mathbb{E}(\Phi_2(t))\\
&-\mathbb{E}(\Phi_3(t))-\mathbb{E}(\Phi_4(t))\\
\leq& B + \sum_{m\in \mathcal{M}}\sum_{i\in \mathcal{E}} \mathbb{E}(Z_{mi}(t))\cdot \epsilon_{mi}  - \mathbb{E}(\Phi_1^{*\nu}(t)) - \mathbb{E}(\Phi_2^{*\nu}(t)) - \mathbb{E}(\Phi_4^{*\nu}(t)),
\end{align*}}\vspace{-4mm}

\noindent which is based on the fact that $\Phi_3(t)\geq 0$.

Recalling the \emph{drift-plus-penalty} inequality (\ref{eqn:drift-plus-penalty}), we have that, for time $t=nT+\tau$ with $\tau\in [0, T-s^{max}]$,

\vspace{-4mm}{\small
\begin{align*}
&V\mathbb{E}(\sum_{m\in \mathcal{M}}\sum_{i\in \mathcal{E}}(U(\eta_{mi}(t)\cdot s_m)-\beta d_{mi}(t)\cdot s_m))\\
\geq& \mathbb{E}(\Phi_1^{*\nu}(t)) + \mathbb{E}(\Phi_2^{*\nu}(t)) - \tau B' + (1-\delta)(1-\theta)\mathbb{E}(\Phi_3^{*0}(nT))\\
 &+ \mathbb{E}(\Phi_4^{*\nu}(t))-B -\sum_{m\in \mathcal{M}}\sum_{i\in \mathcal{E}} \mathbb{E}(Z_{mi}(t))\cdot \epsilon_{mi},
\end{align*}}\vspace{-4mm}

\noindent for time $t=nT+\tau$ with $\tau\in (T-s^{max}, T)$,

\vspace{-4mm}{\small
\begin{align*}
&V\mathbb{E}(\sum_{m\in \mathcal{M}}\sum_{i\in \mathcal{E}}(U(\eta_{mi}(t)\cdot s_m)-\beta d_{mi}(t)\cdot s_m))\\
\geq& \mathbb{E}(\Phi_1^{*\nu}(t)) + \mathbb{E}(\Phi_2^{*\nu}(t)) + \mathbb{E}(\Phi_4^{*\nu}(t))-B\\
& - \sum_{m\in \mathcal{M}}\sum_{i\in \mathcal{E}} \mathbb{E}(Z_{mi}(t))\cdot \epsilon_{mi}.
\end{align*}}\vspace{-4mm}

Summing up the above two inequalities for $t=nT+\tau$ with $\tau\in [0, T)$, we have that

\vspace{-4mm}{\small
\begin{align*}
&V\sum_{\tau \in [0, T)}\mathbb{E}(\sum_{m\in \mathcal{M}}\sum_{i\in \mathcal{E}}(U(\eta_{mi}(nT+\tau)\cdot s_m)\\
&-\beta d_{mi}(nT+\tau)\cdot s_m))\\
\geq& \sum_{\tau \in [0, T)}(\mathbb{E}(\Phi_1^{*\nu}(t))+ \mathbb{E}(\Phi_2^{*\nu}(t))+ \mathbb{E}(\Phi_4^{*\nu}(t)))  \\
&+ (1-\delta)(1-\theta)(T-s^{max}+1)\mathbb{E}(\Phi_3^{*0}(nT))\\
& - B'\sum_{\tau \in [0,T-s^{max}]}\tau  -T B\\
& - \sum_{\tau \in [0, T)} \sum_{m\in \mathcal{M}}\sum_{i\in \mathcal{E}} \mathbb{E}(Z_{mi}(nT+\tau))\cdot \epsilon_{mi}.
\end{align*}}\vspace{-4mm}

We expand the right-hand side of the above inequality and have that

\vspace{-4mm}{\small
\begin{align*}
&\sum_{\tau \in [0, T)}\sum_{m\in \mathcal{M}}\sum_{i\in \mathcal{E}}V\mathbb{E}(U(\eta_{mi}(nT+\tau)\cdot s_m)-\beta d_{mi}(nT+\tau)s_m)\\
\geq & \sum_{\tau \in [0, T)}\sum_{m\in \mathcal{M}}\sum_{i\in \mathcal{E}}V\mathbb{E}(U(\eta_{mi}^{*\nu}(nT+\tau)\cdot s_m)-\beta d_{mi}^{*\nu}(nT+\tau)s_m)\\
&+ \sum_{\tau \in [0, T)}\sum_{m\in \mathcal{M}}\sum_{i\in \mathcal{E}}\mathbb{E}( Y_{mi}(t)\cdot s_m \cdot(r_{mi}^{*\nu}(nT+\tau) - \eta_{mi}^{*\nu}(nT+\tau)) )\\
&- \sum_{\tau \in [0, T)}\sum_{m\in \mathcal{M}}\sum_{i\in \mathcal{E}}\mathbb{E}(r_{mi}^{*\nu}(nT+\tau)\cdot s_m \cdot Q_{mi}(nT+\tau) )\\
&+ (1-\delta)(1-\theta)(T-s^{max}+1)\sum_{m\in \mathcal{M}}\sum_{i\in \mathcal{E}}\mathbb{E}(\mu_{mi}^{*0}(nT)\\
&\times(Q_{mi}(nT)+Z_{mi}(nT)))\\
& + \sum_{\tau \in [0, T)}\sum_{m\in \mathcal{M}}\sum_{i\in \mathcal{E}}\mathbb{E}(d_{mi}^{*\nu}(nT+\tau)s_m(Q_{mi}(nT+\tau)\\
&+ Z_{mi}(nT+\tau)))\\
&-B' \frac{(T-s^{max}+1)(T-s^{max})}{2} - T B\\
& - \sum_{\tau \in [0, T)} \sum_{m\in \mathcal{M}}\sum_{i\in \mathcal{E}} \mathbb{E}(Z_{mi}(nT+\tau))\cdot \epsilon_{mi}\\
\geq & \sum_{\tau \in [0, T)}\sum_{m\in \mathcal{M}}\sum_{i\in \mathcal{E}}V(U(\eta_{mi}^{*\nu}\cdot s_m)-\beta d_{mi}^{*\nu}s_m)\\
&- T\sum_{m\in \mathcal{M}}\sum_{i\in \mathcal{E}}\mathbb{E}(r_{mi}^{*\nu}\cdot s_m \cdot Q_{mi}(nT))\\
&- \sum_{\tau \in [0, T)} \sum_{m\in \mathcal{M}}\sum_{i\in \mathcal{E}}\tau (A_{mi}^{max} s_m)^2\\
& - T\sum_{m\in \mathcal{M}}\sum_{i\in \mathcal{E}}\mathbb{E}(Z_{mi}(nT)\cdot \epsilon_{mi})\\
& - \sum_{\tau \in [0, T)} \sum_{m\in \mathcal{M}}\sum_{i\in \mathcal{E}}\tau \epsilon_{mi}^2\\
&+ (1-\delta)(1-\theta)(T-s^{max}+1)\sum_{m\in \mathcal{M}}\sum_{i\in \mathcal{E}}\mathbb{E}(\mu_{mi}^{*0}\\
&\times(Q_{mi}(nT)+Z_{mi}(nT)))
\end{align*}
\begin{align*}
& + \sum_{\tau \in [0, T)}\sum_{m\in \mathcal{M}}\sum_{i\in \mathcal{E}}\mathbb{E}(d_{mi}^{*\nu}s_m(Q_{mi}(nT)+ Z_{mi}(nT)))\\
& - \sum_{\tau \in [0, T)}\sum_{m\in \mathcal{M}}\sum_{i\in \mathcal{E}}(2d_{mi}^{max}s_m\tau(1+d_{mi}^{max}s_m))\\
&-B' \frac{(T-s^{max}+1)(T-s^{max})}{2} - T B.
\end{align*}}\vspace{-4mm}

The second inequality comes from the facts that:
\begin{itemize}
\item Since the stationary randomized algorithm should stabilize the network, each virtual queue $Y_{mi}(t)$ is also stable. Thus, we have that

    \vspace{-4mm}{\small
    \begin{align*}
    \mathbb{E}(r_{mi}^{*\nu}(nT+\tau) - \eta_{mi}^{*\nu}(nT+\tau)) \geq 0.
    \end{align*}}\vspace{-4mm}

\item According to the queueing law in Eqn.~(\ref{eqn:packet-queue}), we have that $Q_{mi}(nT+\tau)-Q_{mi}(nT)\leq \tau A_{mi}^{max}s_m$. Meanwhile, it is a fact that $r_{mi}^{*\nu}=\mathbb{E}(r_{mi}^{*\nu}(nT+\tau))$ So, we further have that

    \vspace{-4mm}{\small
    \begin{align*}
    &\mathbb{E}(r_{mi}^{*\nu}(nT+\tau)\cdot s_m \cdot Q_{mi}(nT+\tau))\\
     \leq& \mathbb{E}(r_{mi}^{*\nu}\cdot s_m \cdot (Q_{mi}(nT)+ \tau A_{mi}^{max} s_m))\\
      \leq& \mathbb{E}(r_{mi}^{*\nu}\cdot s_m \cdot Q_{mi}(nT)+ \tau (A_{mi}^{max} s_m)^2).
    \end{align*}}\vspace{-4mm}

\item According to the queueing law in Eqn.~(\ref{eqn:virtual-queue}), we have that $Z_{mi}(nT+\tau)-Z_{mi}(nT)\leq \tau \epsilon_{mi}$. So, we further have that

    \vspace{-4mm}{\small
    \begin{align*}
    &\mathbb{E}(\epsilon_{mi} \cdot Z_{mi}(nT+\tau))\\
     \leq& \mathbb{E}(\epsilon_{mi}\cdot (Z_{mi}(nT)+ \tau \epsilon_{mi}))\\
      \leq& \mathbb{E}(\epsilon_{mi} \cdot Z_{mi}(nT)+ \tau (\epsilon_{mi})^2).
    \end{align*}}\vspace{-4mm}

\item According to the queueing laws in Eqn.~(\ref{eqn:packet-queue}) and (\ref{eqn:virtual-queue}), we have that $Q_{mi}(nT+\tau)-Q_{mi}(nT)\geq \tau (1+d_{mi}^{max}s_m)$ and $Z_{mi}(nT+\tau)-Z_{mi}(nT)\geq \tau (1+d_{mi}^{max}s_m)$. Based on its definition, we also know that $d_{mi}^{*\nu}=\mathbb{E}(d_{mi}^{*\nu}(nT+\tau))$. Thus, we have that

    \vspace{-4mm}{\small
    \begin{align*}
    &\mathbb{E}(d_{mi}^{*\nu}(nT+\tau)\cdot s_m \cdot (Q_{mi}(nT+\tau)+Z_{mi}(nT+\tau)))\\
    \geq& \mathbb{E}(d_{mi}^{*\nu}\cdot s_m \cdot (Q_{mi}(nT)+ Z_{mi}(nT)- 2\tau(1+ d_{mi}^{max} s_m)))\\
    \geq& \mathbb{E}(d_{mi}^{*\nu}\cdot s_m \cdot (Q_{mi}(nT)+ Z_{mi}(nT))\\
    &- 2d_{mi}^{max}\cdot s_m\tau(1+ d_{mi}^{max} s_m)).
    \end{align*}}\vspace{-4mm}
\end{itemize}

Let $1-\nu = \frac{(1-\delta)(1-\theta)(T-s^{max}+1)}{T}$. We have that

\vspace{-4mm}{\small
\begin{align*}
\frac{(1-\delta)(1-\theta)(T-s^{max}+1)}{T}\mu_{mi}^{*0}=\mu_{mi}^{*\nu}.
\end{align*}}\vspace{-4mm}

Since the stationary randomized algorithm stabilized the network, including all packet queues, we know that

\vspace{-4mm}{\small
\begin{align*}
&r_{mi}^{*\nu}s_m\leq \mu_{mi}^{*\nu} + d_{mi}^{*\nu}s_m,\\
&\epsilon_{mi}\leq \mu_{mi}^{*\nu} + d_{mi}^{*\nu}s_m.
\end{align*}}\vspace{-4mm}

Then, we can have that

\vspace{-4mm}{\small
\begin{align*}
&\sum_{\tau \in [0, T)}\sum_{m\in \mathcal{M}}\sum_{i\in \mathcal{E}}V\mathbb{E}(U(\eta_{mi}(nT+\tau)\cdot s_m)-\beta d_{mi}(nT+\tau)s_m)\\
\geq & \sum_{\tau \in [0, T)}\sum_{m\in \mathcal{M}}\sum_{i\in \mathcal{E}}V(U(\eta_{mi}^{*\nu}\cdot s_m)-\beta d_{mi}^{*\nu}s_m)\\
&- \frac{T(T-1)}{2} \sum_{m\in \mathcal{M}}\sum_{i\in \mathcal{E}} [(A_{mi}^{max} s_m)^2 + \epsilon_{mi}^2]\\
& - \frac{T(T-1)}{2}\sum_{m\in \mathcal{M}}\sum_{i\in \mathcal{E}}(2d_{mi}^{max}s_m(1+d_{mi}^{max}s_m))\\
&-B' \frac{(T-s^{max}+1)(T-s^{max})}{2} - T B.
\end{align*}}\vspace{-4mm}

On both sides, we sum up over all $n\geq 0$, divide by $VnT$ and take limits on $n\rightarrow \infty$\footnote{It should be noted that $\bar{a}=\lim_{t\rightarrow \infty}\frac{1}{t}\sum_{\tau\in [1,t]}\mathbb{E}(a(\tau))$}. We can get that

\vspace{-4mm}{\small
\begin{align*}
&\sum_{m\in \mathcal{M}}\sum_{i\in \mathcal{E}}(U(\bar{\eta}_{mi}\cdot s_m)-\beta \bar{d}_{mi}s_m)\\
\geq & \sum_{m\in \mathcal{M}}\sum_{i\in \mathcal{E}}(U(\eta_{mi}^{*\nu}\cdot s_m)-\beta d_{mi}^{*\nu}s_m)\\
&- \frac{(T-1)}{2V} B''-B' \frac{(T-s^{max}+1)(T-s^{max})}{2VT} - B/V,
\end{align*}}\vspace{-4mm}

\noindent with $B''=\sum_{m\in \mathcal{M}}\sum_{i\in \mathcal{E}}( (A_{mi}^{max} s_m)^2+ \epsilon_{mi}^2+2d_{mi}^{max}s_m(1+d_{mi}^{max}s_m))$.

With Algorithm \ref{alg:all}, we know that $\bar{r}_{mi}\geq \bar{\eta}_{mi}$. Meanwhile, the stationary randomized algorithm can make $r^{*\nu}=\eta^{*\nu}$. Then, we finally have that

\vspace{-4mm}{\small
\begin{align*}
&\sum_{m\in \mathcal{M}}\sum_{i\in \mathcal{E}}(U(\bar{r}_{mi}\cdot s_m)-\beta \bar{d}_{mi}s_m)\\
\geq & \sum_{m\in \mathcal{M}}\sum_{i\in \mathcal{E}}(U(r_{mi}^{*\nu}\cdot s_m)-\beta d_{mi}^{*\nu}s_m)\\
&- \frac{(T-1)}{2V} B''-B' \frac{(T-s^{max}+1)(T-s^{max})}{2VT} - B/V,
\end{align*}}\vspace{-4mm}

\end{proof}

\vspace{1mm}\noindent \textbf{Remark} (\emph{throughput-delay tradeoff}): If we let $V\rightarrow \infty$, the queue lengths will grow to infinitely large (Lemma \ref{lemma:bounded-queue}). With Lemma \ref{lemma:bounded-queue} and Theorem \ref{theorem:delta-weight}, we see that $(1-\delta)(1-\theta)\rightarrow 1$ in this case. If we further let $T\rightarrow \infty$, we will have $1-\nu = \frac{(1-\delta)(1-\theta)(T-s^{max}+1)}{T}\rightarrow 1$, which means $\sum_{m\in \mathcal{M}}\sum_{i\in \mathcal{E}}(U(r_{mi}^{*\nu}\cdot s_m)-\beta d_{mi}^{*\nu}s_m)$ will be arbitrarily close to the offline optimum within the full capacity region instead of a fraction. In addition, if $T/V\rightarrow 0$, the constant utility gap will become $\frac{(T-1)}{2V} B'' + B' \frac{(T-s^{max}+1)(T-s^{max})}{2VT} + B/V\rightarrow 0$. Nevertheless, with Theorem \ref{theorem:delay}, we have to be able to tolerate a long worst-case delay, which is proportional to $V$.

In conclusion, there is a tradeoff between the utility optimality and the tolerable worst-case delay. If $V\rightarrow \infty$, $T\rightarrow \infty$ and $T/V\rightarrow 0$, we will achieve an utility arbitrarily close to the offline optimum at the cost of infinitely large delay.


\section{Conclusion and Remarks}\label{sec:conclusion}
In this paper, we investigate the optimal design of CSMA-based wireless communication to achieve a time-averaged maximum throughput utility with worst-case delay bounds and non-uniform job sizes, in general network settings. A two-timescale dynamic algorithm is proposed with dynamic decisions on rate control, link scheduling, job transmission and dropping in each time slot. Through rigorous analysis, we demonstrate that the proposed protocol can achieve a throughput utility arbitrarily close to its offline optima for jobs with non-uniform sizes, with a worst-case delay guarantees and a tradeoff of infinitely large maximum allowable delay. As our future work, we will explore the optimal CSMA protocol design in multi-channel settings with provable delay performance for general network topologies.


\bibliographystyle{abbrv}

\begin{appendices}

\opt{long}{

\section{Derivation of the \emph{Drift-plus-Penalty} inequality (\ref{eqn:drift-plus-penalty})}\label{appendix:derivation}

We get the inequality (\ref{eqn:drift-plus-penalty}) based on the following fact: if $a,b,c \geq 0$, $b\leq b^{max}$ and $c\leq c^{max}$, then we have

\vspace{-4mm}{\small
\begin{align}
&(\max\{a-b,0\}+c)^2\notag\\
=& (\max\{a-b,0\})^2 + c^2 + 2c \cdot \max\{a-b,0\}\notag\\
                    \leq& (a-b)^2 + c^2 + 2c (\mathbf{1}_{\{a-b\geq 0\}}\{a-b\} + \mathbf{1}_{\{a-b<0\}}\{0\})\notag\\
                    =&a^2+ b^2 + c^2 -2ab + 2c(\mathbf{1}_{\{a-b\geq 0\}}\{a-b\} + \mathbf{1}_{\{a-b<0\}}\{a-a\})\notag\\
                    =&a^2+ b^2 + c^2 + 2(\mathbf{1}_{\{a-b\geq 0\}}\{ca-cb-ab\}\notag\\
                    & + \mathbf{1}_{\{a-b<0\}}\{ca-ca-ab\})\notag\\
                    \leq& a^2+ (b^{max})^2 + (c^{max})^2 + 2(\mathbf{1}_{\{a-b\geq 0\}}\{ca-ab\} \notag\\
                    &+ \mathbf{1}_{\{a-b<0\}}\{ca-ab\})\notag\\
                    =&a^2+ (b^{max})^2 + (c^{max})^2 - 2a(b-c).\label{eqn:fact}
\end{align}}\vspace{-4mm}

The derivation details of the \emph{Drift-plus-Penalty} inequality (\ref{eqn:drift-plus-penalty}) are as follows.

\vspace{1mm}\noindent $\triangleright$ We know that $Q_{mi}(t)\geq 0$, $0\leq \mu_{mi}(t)\leq 1$, $0\leq d_{mi}(t)\leq d_{mi}^{max}$ and $0\leq r_{mi}(t)\leq A_{mi}^{max}$. With queueing law (\ref{eqn:packet-queue}) and Eqn.~(\ref{eqn:fact}), we have

\vspace{-4mm}{\small
\begin{align*}
[Q_{mi}(t+1)]^2\leq & [Q_{mi}(t)]^2+ (1+d_{mi}^{max}\cdot s_m)^2 + (A_{mi}^{max}\cdot s_m)^2 \\
&- 2Q_{mi}(t)[\mu_{mi}(t)+d_{mi}(t)\cdot s_m - r_{mi}(t)\cdot s_m],\\
&~~~~~~~~~~\forall m\in \mathcal{M}, i\in \mathcal{E}.
\end{align*}}\vspace{-4mm}

\vspace{1mm}\noindent $\triangleright$ We know that $Y_{mi}(t)\geq 0$, $0\leq r_{mi}(t)\leq A_{mi}^{max}$ and $0\leq \eta_{mi}(t)\leq A_{mi}^{max}$. With queueing law (\ref{eqn:virtual-queue2}) and Eqn.~(\ref{eqn:fact}), we have

\vspace{-4mm}{\small
\begin{align*}
[Y_{mi}(t+1)]^2\leq & [Y_{mi}(t)]^2+ (A_{mi}^{max}s_m)^2 + (A_{mi}^{max}s_m)^2 \\
&- 2Y_{mi}(t)s_m[r_{mi}(t) - \eta_{mi}(t)],\\
&~~~~~~~~~~\forall m\in \mathcal{M}, i\in \mathcal{E}.
\end{align*}}\vspace{-4mm}

\vspace{1mm}\noindent $\triangleright$ With queueing law (\ref{eqn:virtual-queue}), we have

\vspace{-4mm}{\small
\begin{align*}
Z_{mi}(t+1)\leq&\max\{Z_{mi}(t)-d_{mi}(t)\times s_m\notag\\
            &-\mathbf{1}_{\{Q_{mi}(t)>0\}}\mu_{mi}(t)-\mathbf{1}_{\{Q_{mi}(t)=0\}}, 0\}+\epsilon_{mi},\notag\\
&~~~~~~~~~~~~~~~~~~~~ \forall m\in \mathcal{M}, i\in \mathcal{E}.
\end{align*}}\vspace{-4mm}

Since $0\leq \mu_{mi}(t)\leq 1$, we also have

\vspace{-4mm}{\small
\begin{align*}
\mu_{mi}(t)\leq \mathbf{1}_{\{Q_{mi}(t)>0\}}\mu_{mi}(t)+\mathbf{1}_{\{Q_{mi}(t)=0\}} \leq 1.
\end{align*}}\vspace{-4mm}

We know that $Z_{mi}(t)\geq 0$, $0\leq \mu_{mi}(t)\leq 1$ and $0\leq d_{mi}(t)\leq d_{mi}^{max}$. With Eqn.~(\ref{eqn:fact}), we can further have

\vspace{-4mm}{\small
\begin{align*}
[Z_{mi}(t+1)]^2\leq & [Z_{mi}(t)]^2+ (1+d_{mi}^{max}\cdot s_m)^2 + (\epsilon_{mi})^2 \\
&- 2Z_{mi}(t)[\mu_{mi}(t)+d_{mi}(t)\cdot s_m  - \epsilon], \\
&~~~~~~~~~~\forall m\in \mathcal{M}, i\in \mathcal{E}.
\end{align*}}\vspace{-4mm}

According to the above results and the definitions of Lyapunov function in Eqn.~(\ref{eqn:lyapunov}) and \emph{one-slot conditional Lyapunov drift} in Eqn.~(\ref{eqn:drift}), we have

\vspace{-4mm}{\small
\begin{align*}
&\Delta(\Theta(t))-V\sum_{m\in \mathcal{M}}\sum_{i\in \mathcal{E}}(U(\eta_{mi}(t)\cdot s_m)-\beta d_{mi}(t)\times s_m)\notag\\
\leq & \frac{1}{2}\sum_{m\in \mathcal{M}}\sum_{i\in \mathcal{E}}[(1+d_{mi}^{max}\cdot s_m)^2 + (A_{mi}^{max}\cdot s_m)^2]\notag\\
     & + \frac{1}{2}\sum_{m\in \mathcal{M}}\sum_{i\in \mathcal{E}}[(A_{mi}^{max}s_m)^2 + (A_{mi}^{max}s_m)^2]\notag\\
     & + \frac{1}{2}\sum_{m\in \mathcal{M}}\sum_{i\in \mathcal{E}}[(1+d_{mi}^{max}\cdot s_m)^2 + (\epsilon_{mi})^2]\notag\\
     & - \sum_{m\in \mathcal{M}}\sum_{i\in \mathcal{E}}Q_{mi}(t)[\mu_{mi}(t)+d_{mi}(t)\cdot s_m - r_{mi}(t)\cdot s_m]\notag\\
     & - \sum_{m\in \mathcal{M}}\sum_{i\in \mathcal{E}}Y_{mi}(t)s_m[r_{mi}(t) - \eta_{mi}(t)]\notag\\
     & - \sum_{m\in \mathcal{M}}\sum_{i\in \mathcal{E}}Z_{mi}(t)[\mu_{mi}(t)+d_{mi}(t)\cdot s_m - \epsilon_{mi}]\notag\\
     & - V\sum_{m\in \mathcal{M}}\sum_{i\in \mathcal{E}}(U(\eta_{mi}(t)\cdot s_m)-\beta d_{mi}(t)\times s_m)\notag\\
= & \frac{1}{2}\sum_{m\in \mathcal{M}}\sum_{i\in \mathcal{E}}[3(A_{mi}^{max}s_m)^2+ 2(1+d_{mi}^{max}\cdot s_m)^2 + (\epsilon_{mi})^2]\notag\\
     & - \sum_{m\in \mathcal{M}}\sum_{i\in \mathcal{E}}[V\cdot U(\eta_{mi}(t)\cdot s_m) - Y_{mi}(t)\cdot \eta_{mi}(t)s_m]\notag\\
     & - \sum_{m\in \mathcal{M}}\sum_{i\in \mathcal{E}}r_{mi}(t)\cdot s_m \cdot[Y_{mi}(t)-Q_{mi}(t)]\notag\\
     & - \sum_{m\in \mathcal{M}}\sum_{i\in \mathcal{E}}\mu_{mi}(t)\cdot [Q_{mi}(t) + Z_{mi}(t)]\notag\\
     & - \sum_{m\in \mathcal{M}}\sum_{i\in \mathcal{E}}d_{mi}(t)\cdot s_m\cdot [Q_{mi}(t)+Z_{mi}(t)-V\cdot \beta]\notag\\
     & + \sum_{m\in \mathcal{M}}\sum_{i\in \mathcal{E}} Z_{mi}(t)\cdot \epsilon_{mi}\\
= & B + \sum_{m\in \mathcal{M}}\sum_{i\in \mathcal{E}} Z_{mi}(t)\cdot \epsilon_{mi} - \Phi_1(t)-\Phi_2(t) - \Phi_3(t)-\Phi_4(t).
\end{align*}}\vspace{-4mm}

}

\opt{long}{
\section{Proof to Lemma \ref{lemma:bounded-queue}}\label{appendix:bounded-queue}

\begin{proof}
This lemma can be proved by induction based on the rate control and job-dropping decisions in Algorithm \ref{alg:all}.

\textbf{Induction basis}: At time slot $0$, every packet queue and virtual queue are empty. So we have that

\vspace{-4mm}{\small
\begin{align*}
Q_{mi}(0)=0\leq Q_{mi}^{max},\ Y_{mi}(0)=0\leq Y_{mi}^{max},\ Z_{mi}(0)=0\leq Z_{mi}^{max}.
\end{align*}}\vspace{-4mm}

\textbf{Induction steps}: For time slot $t>0$, let $Q_{mi}(t-1)\leq Q_{mi}^{max}$, $Y_{mi}(t-1)\leq Y_{mi}^{max}$ and $Z_{mi}(t-1)\leq Z_{mi}^{max}$. We have the following possible cases:
\begin{itemize}
\item $Y_{mi}(t-1)\in [0, Y_{mi}^{max}-A_{mi}^{max}\cdot s_m)$ or $Y_{mi}(t-1)\in [Y_{mi}^{max}- A_{mi}^{max}\cdot s_m, Y_{mi}^{max}]$;
\item $Q_{mi}(t-1)\in [0, Q_{mi}^{max}-A_{mi}^{max}\cdot s_m]$ or $Q_{mi}(t-1)\in (Q_{mi}^{max}- A_{mi}^{max}\cdot s_m, Q_{mi}^{max}]$;
\item $Z_{mi}(t-1)\in [0, Z_{mi}^{max}-\epsilon_{mi}]$ or $Z_{mi}(t-1)\in (Z_{mi}^{max}-\epsilon_{mi}, Z_{mi}^{max}]$.
\end{itemize}

We first prove the upper bounds for virtual queues, $Y_{mi}(t),\ \forall m\in \mathcal{M}, i\in \mathcal{E}$.

\vspace{1mm}\noindent
$\triangleright$ If $0\leq Y_{mi}(t-1)< Y_{mi}^{max}-A_{mi}^{max}\cdot s_m$, we have that

\vspace{-4mm}{\small
\begin{align*}
\begin{split}
\eta_{mi}(t-1)&=\max\{\min\{U'^{-1}(\frac{Y_{mi}(t-1)}{V}) / s_m,A_{mi}^{max}\},0\}\\
         &>\max\{\min\{U'^{-1}(U'(0)) / s_m,A_{mi}^{max}\},0\}=0,
\end{split}
\end{align*}}\vspace{-4mm}

\noindent according to formula (\ref{eqn:rate1}). Note that the
inequality is because that $U(\cdot)$ is differential and concave,
which means $U'^{-1}(\cdot)$ is a decreasing function.

Thus, in this case, $0<\eta_{mi}(t-1)\leq A_{mi}^{max}$. We have that

\vspace{-4mm}{\small
\begin{align*}
Y_{mi}(t)&=\max\{Y_{mi}(t-1)-r_{mi}(t-1)\cdot s_m,0\}+\eta_{mi}(t-1)\cdot s_m\\
        &\leq \max\{Y_{mi}(t-1),0\}+A_{mi}^{max}\cdot s_m\\
        &<Y_{mi}^{max},
\end{align*}}\vspace{-4mm}

\noindent according to the queueing law (\ref{eqn:virtual-queue2}).

\vspace{1mm}\noindent
$\triangleright$ If $Y_{mi}^{max}- A_{mi}^{max}\leq Y_{mi}(t-1)\leq Y_{mi}^{max}$, we have that

\vspace{-4mm}{\small
\begin{align*}
\begin{split}
\eta_{mi}(t-1)&=\max\{\min\{U'^{-1}(\frac{Y_{mi}(t-1)}{V})/s_m,A_{mi}^{max}\},0\}\\
         &\leq \max\{\min\{U'^{-1}(U'(0))/s_m,A_{mi}^{max}\},0\}=0.\\
\end{split}
\end{align*}}\vspace{-4mm}

Thus, in this case, $\eta_{mi}(t-1)\leq 0$. We have that

\vspace{-4mm}{\small
\begin{align*}
Y_{mi}(t)&=\max\{Y_{mi}(t-1)-r_{mi}(t-1)\cdot s_m,0\}+\eta_{mi}(t-1)\cdot s_m\\
        &\leq \max\{Y_{mi}(t-1),0\}\leq Y_{mi}^{max}
\end{align*}}\vspace{-4mm}

Up to now, $Y_{mi}(t)\leq Y_{mi}^{max},\ \forall m\in \mathcal{M}, i\in \mathcal{E}$ for each
time slot $t$ is proved.

Now, we discuss with the size of $Q_{mi}(t)$:

\vspace{1mm}\noindent
$\triangleright$  If $0\leq Q_{mi}(t-1) \leq Q_{mi}^{max} - A_{mi}^{max}s_m$, we have that

%
%

\vspace{-4mm}{\small
\begin{align*}
Q_{mi}(t)=&\max\{Q_{mi}(t-1)-d_{mi}(t-1)s_m-\mu_{mi}(t-1), 0\}\\
&+r_{mi}(t-1)s_m\\
\leq &\max\{Q_{mi}^{max} - A_{mi}^{max}s_m, 0\} + A_{mi}^{max}s_m\\
= & Q_{mi}^{max},
\end{align*}}\vspace{-4mm}

\noindent according to the queuing law (\ref{eqn:packet-queue}).

\vspace{1mm}\noindent
$\triangleright$ If $Q_{mi}^{max} - A_{mi}^{max}s_m < Q_{mi}(t-1)\leq Q_{mi}^{max}$, we have that

\vspace{-4mm}{\small
\begin{align*}
Q_{mi}(t-1) > Q_{mi}^{max} - A_{mi}^{max}s_m = Y_{mi}^{max}\geq Y_{mi}(t-1).
\end{align*}}\vspace{-4mm}

Hence, we have that

\vspace{-4mm}{\small
\begin{align*}
\begin{split}
r_{mi}(t-1)=0,
\end{split}
\end{align*}}\vspace{-4mm}

\noindent according to formula (\ref{eqn:rate2}).

Thus, we further have that

\vspace{-4mm}{\small
\begin{align*}
Q_{mi}(t)=&\max\{Q_{mi}(t-1)-d_{mi}(t-1)s_m - \mu_{mi}(t-1)\}\\
&+r_{mi}(t)s_m\\
\leq& \max\{Q_{mi}^{max}, 0\} = Q_{mi}^{max}.
\end{align*}}\vspace{-4mm}

Up to now, $Q_{mi}(t)\leq Q_{mi}^{max},\ \forall m\in \mathcal{M}$
for each time slot $t$ is proved.

%
%
%
%

We next prove the queue length bounds for each virtual queue $Z_{mi}(t),\ \forall m\in \mathcal{M}, i\in \mathcal{E}$.

\vspace{1mm}\noindent
$\triangleright$ If $Z_{mi}(t-1)\in [0, Z_{mi}^{max}-\epsilon_{mi}]$, we have that

\vspace{-4mm}{\small
\begin{align*}
Z_{mi}(t)=&\max\{Z_{mi}(t-1)+ \mathbf{1}_{\{Q_{mi}(t-1)>0\}}(\epsilon_{mi} - \mu_{mi}(t-1))\\
          &-d_{mi}(t-1)\times s_m-\mathbf{1}_{\{Q_{mi}(t-1)=0\}} , 0\}\\
         \leq & \max\{Z_{mi}^{max}-\epsilon_{mi} + \epsilon_{mi} , 0\} = Z_{mi}^{max}.
\end{align*}}\vspace{-4mm}

\vspace{1mm}\noindent
$\triangleright$ If $Z_{mi}(t-1)\in (Z_{mi}^{max}-\epsilon_{mi}, Z_{mi}^{max}]$, we have that

\vspace{-4mm}{\small
\begin{align*}
Q_{mi}(t-1)+Z_{mi}(t-1)> V\cdot \beta/s_m.
\end{align*}}\vspace{-4mm}

Thus, the job-drop decision in time slot $t-1$ is $d_{mi}(t-1)=d_m^{max}$ according to Eqn.~(\ref{eqn:drop-decision}). Then, we have that

\vspace{-4mm}{\small
\begin{align*}
Z_{mi}(t)=&\max\{Z_{mi}(t-1)+ \mathbf{1}_{\{Q_{mi}(t-1)>0\}}(\epsilon_{mi} - \mu_{mi}(t-1))\\
          &-d_{mi}(t-1)\times s_m-\mathbf{1}_{\{Q_{mi}(t-1)=0\}}, 0\}\\
         \leq & \max\{Z_{mi}^{max}+ \epsilon_{mi} - d_m^{max}\times s_m, 0\} \leq Z_{mi}^{max}.
\end{align*}}\vspace{-4mm}

\noindent Here, the second inequality is based on the fact that $d_{mi}^{max}\geq A_{mi}^{max} + \epsilon_{mi}/s_m$.

Thus, we also prove that $Z_{mi}(t)\leq Z_{mi}^{max}$.

\end{proof}

}

\opt{long}{
\section{Proof to Theorem \ref{theorem:delay}}\label{appendix:delay}
\begin{proof}
We prove this theorem by contradiction.

For each job type $m\in \mathcal{M}$ on each link $i\in \mathcal{E}$, the admitted jobs at time slot $t\geq 0$ is $r_{mi}(t)$ and the earliest time they can depart the queue $Q_{mi}(t)$ is $t+1$. We show that all these jobs depart (by being either scheduled or dropped) on or before $t+D_{mi}$.

Suppose this is not true, we will come to a contradiction. We must have that $Q_{mi}(\tau) >0 $ for all $\tau\in [t+1, \ldots, t+D_i]$ (otherwise, all the jobs have departed by time $t+D_{mi}$). With the queueing law in Eqn.~(\ref{eqn:virtual-queue}), we have that

\vspace{-4mm}{\small
\begin{align*}
Z_{mi}(\tau+1) =& \max\{Z_{mi}(\tau) + \epsilon_{mi} - \mu_{mi}(\tau) - d_{mi}(\tau)\times s_m, 0\}\\
           \geq & Z_{mi}(\tau) + \epsilon_{mi} - \mu_{mi}(\tau) - d_{mi}(\tau)\times s_m.
\end{align*}}\vspace{-4mm}

Summing the above over $\tau\in [t+1, \ldots, t+D_{mi}]$, we have that

\vspace{-4mm}{\small
\begin{align*}
Z_{mi}(t+D_i+1) - Z_{mi}(t+1) \geq \epsilon_{mi} \cdot D_{mi} - \sum_{\tau = t+1}^{t+D_{mi}}[\mu_{mi}(\tau) + d_{mi}(\tau)\times s_m].
\end{align*}}\vspace{-4mm}

Rearranging the above inequality and using the fact that $Z_{mi}(t+D_{mi}+1) \leq Z_{mi}^{max}$ and $Z_{mi}(t+1)\geq 0$, we have that

\vspace{-4mm}{\small
\begin{align}\label{eqn:qos-proof1}
\epsilon_{mi} \cdot D_{mi}  - Z_{mi}^{max} \leq \sum_{\tau = t+1}^{t+D_{mi}}[\mu_{mi}(\tau) + d_{mi}(\tau)\times s_m].
\end{align}}\vspace{-4mm}

Since the jobs are departing in a FIFO fashion, the jobs $r_{mi}(t)$, which arrive and are admitted at slot $t$, are placed at the end of the queue at slot $t+1$, and should be fully cleared when all the jobs backlogged in $Q_{mi}(t+1)$ have departed. That is, the last job of $r_{mi}(t)$ is scheduled or dropped on slot $t+T$ with $T>0$ as the smallest integer satisfying $\sum_{\tau = t+1}^{t+T}[\mu_{mi}(\tau)+d_{mi}(\tau)\times s_m]\geq Q_{mi}(t+1)$. Based on our assumption that not all of the $r_{mi}(t)$ jobs depart by time $t+D_{mi}$, we must have that

\vspace{-4mm}{\small
\begin{align}\label{eqn:qos-proof2}
\sum_{\tau = t+1}^{t+D_{mi}}[\mu_{mi}(\tau)+d_{mi}(\tau)\times s_m] < Q_{mi}(t+1) \leq Q_{mi}^{max}.
\end{align}}\vspace{-4mm}

Combining Eqn.~(\ref{eqn:qos-proof1}) and (\ref{eqn:qos-proof2}), we have that

\vspace{-4mm}{\small
\begin{align*}
&\epsilon_{mi} \cdot D_{mi}  - Z_{mi}^{max} < Q_{mi}^{max}\\
\Rightarrow & \epsilon_{mi} < \frac{Q_{mi}^{max}+ Z_{mi}^{max}}{D_{mi}}.
\end{align*}}\vspace{-4mm}

This contradicts with the given fact that $\epsilon_{mi} = \frac{Q_{mi}^{max}+ Z_{mi}^{max}}{D_{mi}}$. Hence, we have proved that each job of type $m$ on link $i$ is either scheduled or dropped with Algorithm \ref{alg:all} before its maximum delay $D_{mi}$, if we set $\epsilon_{mi} = \frac{Q_{mi}^{max}+ Z_{mi}^{max}}{D_{mi}}$.

\end{proof}

}

\opt{long}{
\section{Proof to Theorem \ref{theorem:delta-weight}}\label{appendix:delta-weight}
We first prove the correctness of Algorithm \ref{alg:CSMA} by showing that the generated transmission schedules are collision-free. We prove the optimality by modeling the transmission scheduling decisions as a Discrete-Time Markov Chain (DTMC), which is next proved to be reversible. Hence, we derive the stationary distribution for each collision-free transmission scheduling decision, based on which we evaluate the achievable value for $\Phi_3(t)$ in expectation with Algorithm \ref{alg:CSMA}.

Let $\chi(t) = \{x_i(t)|x_i(t)=1, \forall i\in \mathcal{E}\}$ be a scheduling decision in time slot $t$, and $\Lambda$ be the set of all collision-free scheduling decisions. Denote $\mathcal{Z}$ be the set of all possible control schedules of $\mathbf{z}(t)$. It is clear that $\mathcal{Z}\subseteq \Lambda$. Let $\rho(\mathbf{z}(t))> 0$ be the probability of selecting $\mathbf{z}(t)$ as the control schedule. We have that $\sum_{\mathbf{z}(t)\in \mathcal{Z}}\rho(\mathbf{z}(t)) = 1$.


\begin{lemma}\label{lemma:feasible}
\emph{If the scheduling decision in time slot $t-1$ and the control schedule in time slot $t$ are both collision free, \emph{i.e.}, $\chi(t-1)\in \Lambda$ and $\mathbf{z}(t)\in \Lambda$, we have that the link scheduling in time slot $t$ with Algorithm \ref{alg:CSMA} is also collision free, \emph{i.e.},  $\chi(t)\in \Lambda$.}
\end{lemma}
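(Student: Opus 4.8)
The plan is to fix an arbitrary pair of mutually interfering links and rule out, by a finite case analysis on their control-schedule entries, that both become active in slot $t$. Concretely, fix $i\in\mathcal{E}$ and $j\in\mathcal{C}_i$ (and hence $i\in\mathcal{C}_j$, invoking the symmetry of the interference relation, which is standard in the CSMA model and is what constraint~(\ref{eqn:collision-cons}) encodes). Assume toward a contradiction that the schedule $\chi(t)$ produced by the scheduling phase of Algorithm~\ref{alg:CSMA} satisfies $x_i(t)=1$ and $x_j(t)=1$. Recall the three rules of the scheduling phase: if $z_i(t)=0$ then $x_i(t)=x_i(t-1)$; if $z_i(t)=1$ and some link in $\mathcal{C}_i$ was active in slot $t-1$ then $x_i(t)=0$; and if $z_i(t)=1$ and no link in $\mathcal{C}_i$ was active in slot $t-1$ then $x_i(t)$ equals $1$ with probability $p_i$ and $0$ otherwise. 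The randomization is irrelevant to feasibility: we only use that $x_i(t)=1$ forces the corresponding deterministic precondition to hold.

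Next I would split on the value of $(z_i(t),z_j(t))\in\{0,1\}^2$. If $z_i(t)=z_j(t)=0$, both links inherit their slot-$(t-1)$ states, so $x_i(t-1)=x_j(t-1)=1$, contradicting $\chi(t-1)\in\Lambda$ because $j\in\mathcal{C}_i$. If exactly one control bit is $1$, say $z_j(t)=0$ and $z_i(t)=1$, then $x_j(t-1)=x_j(t)=1$, so link $i$ has an active interferer ($j\in\mathcal{C}_i$) in slot $t-1$; the second rule then forces $x_i(t)=0$, a contradiction — and the case $z_i(t)=0$, $z_j(t)=1$ is symmetric, using $i\in\mathcal{C}_j$. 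Finally, if $z_i(t)=z_j(t)=1$, then $\mathbf{z}(t)$ already violates $x_i(t)+x_j(t)\le 1$ for the interfering pair $i,j$, contradicting $\mathbf{z}(t)\in\Lambda$. Since all four cases are impossible, no mutually interfering pair can be jointly active in slot $t$, i.e. $\chi(t)\in\Lambda$.

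I do not expect a genuine obstacle here; the argument is a finite case check. The one point requiring mild care is the (standard) symmetry of the interference relation, which is what lets a link that stays active from slot $t-1$ because its own control bit is $0$ be ``seen'' as an active neighbor by its interfering partner; this is consistent with the definition of $\Lambda$ through constraint~(\ref{eqn:collision-cons}). It is also worth remarking that this lemma is precisely the inductive step of a straightforward induction: combined with $\chi(0)$ being empty (hence feasible) and with every control schedule being collision-free by construction of the control phase (so $\mathcal{Z}\subseteq\Lambda$), it yields $\chi(t)\in\Lambda$ for all $t$.
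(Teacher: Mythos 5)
Your proof is correct and takes essentially the same route as the paper's: a finite case analysis on the control-schedule bits of an interfering pair, using collision-freeness of $\chi(t-1)$ and of $\mathbf{z}(t)$, together with the scheduling-phase rule that a link with control bit $1$ defers to any interferer active in slot $t-1$. The only cosmetic difference is that you argue pair-by-pair by contradiction whereas the paper fixes an active link and shows all of its interferers are inactive; both rely on the same symmetry $j\in\mathcal{C}_i\Leftrightarrow i\in\mathcal{C}_j$ that you correctly flag.
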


\begin{proof}
A link scheduling decision $\chi(t)$ is collision-free if and only if $\forall x_i(t)\in \chi(t)$, we have $x_j(t)=0$, $\forall j \in \mathcal{C}_i$.

Consider any $x_i(t)\in \chi(t)$. If $x_i(t)\not\in \mathbf{z}(t)$, we have that $x_i(t-1) = x_i(t) = 1$ based on Algorithm \ref{alg:CSMA}, which means $x_i(t-1)\in \chi(t-1)$. Since $\chi(t-1)$ is collision-free, we know that $x_j(t-1) = 0$, $\forall j\in \mathcal{C}_i$. Then, we can discuss the value of $x_j(t)$ as follows.
\begin{itemize}
\item If $x_j(t)\not\in \mathbf{z}(t)$, we know that $x_j(t) = x_j(t-1) = 0$ based on Algorithm \ref{alg:CSMA}.
\item If $x_j(t) \in \mathbf{z}(t)$, we have $x_j(t) = 0$ since $x_i(t-1)\in \chi(t-1)$ and $i \in \mathcal{C}_j$.
\end{itemize}

If $x_i(t)\in \mathbf{z}(t)$, we have that $x_i(t)\in \chi(t)$ only if $x_j(t-1) = 0$, $\forall j \in \mathcal{C}_i$. Since $x_i(t)\in \mathbf{z}(t)$ and $\mathbf{z}(t)$ is collision-free, we know that $\mathcal{C}_i \cap \mathbf{z}(t) = \emptyset$. Hence, $x_j(t) = x_j(t-1) = 0$.

Thus, we prove this lemma by showing that $\forall x_i(t)\in \chi(t)$, we have $x_j(t)=0$, $\forall j \in \mathcal{C}_i$.

\end{proof}

\begin{lemma}\label{lemma:transit}
\emph{A link scheduling decision $\chi \in \Lambda$ can transit to a link scheduling decision $\chi'\in \Lambda$ if and only if $\chi \cup \chi'\in \Lambda$ and there exists a control schedule $\mathbf{z}\in \mathcal{Z}$ such that}

\vspace{-4mm}{\small
\begin{align*}
\chi \bigtriangleup \chi' = (\chi / \chi')\cup (\chi' / \chi) \subseteq \mathbf{z},
\end{align*}}\vspace{-4mm}

\noindent \emph{and the transition probability from $\chi$ to $\chi'$ is}

\vspace{-4mm}{\small
\begin{align}\label{eqn:transitionrate}
P(\chi, \chi') = & \sum_{\mathbf{z}\in \mathcal{Z}: \chi \bigtriangleup \chi' \subseteq \mathbf{z}} \rho(\mathbf{z}) \left( \prod_{x_i\in \chi / \chi'}1-p_i\right) \left( \prod_{x_i\in \chi' / \chi}p_i\right)\notag\\
& \left( \prod_{x_i\in \mathbf{z} \cap (\chi \cap \chi')}p_i\right)\left( \prod_{x_i\in \mathbf{z}/ (\chi\cup \chi')/ \mathcal{C}(\chi \cup \chi')}1-p_i\right)
\end{align}}\vspace{-4mm}

\end{lemma}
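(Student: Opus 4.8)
The plan is to establish the transition characterization and probability by working directly from the two-phase structure of Algorithm~\ref{alg:CSMA} at a super-slot boundary, conditioning on the realized control schedule $\mathbf{z}$. First I would argue the ``only if'' direction: suppose the state is $\chi$ in the previous super slot and a single step of Algorithm~\ref{alg:CSMA} produces $\chi'$. Any link $i$ whose activity status changes (i.e.\ $x_i\in\chi\bigtriangleup\chi'$) must have $z_i=1$, since Step~2 of the algorithm keeps $x_i(t)=x_i(t-1)$ whenever $z_i(t)=0$; hence $\chi\bigtriangleup\chi'\subseteq\mathbf{z}$. Moreover $\chi'\in\Lambda$ by Lemma~\ref{lemma:feasible}, and I would observe that no link in $\chi\cap\chi'$ (a link active both before and after) can conflict with a newly-activated link of $\chi'/\chi$, because a link with an active interferer in the previous slot is forced to $x_i(t)=0$; together with $\chi,\chi'\in\Lambda$ this yields $\chi\cup\chi'\in\Lambda$. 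For the ``if'' direction, given $\chi\cup\chi'\in\Lambda$ and a control schedule $\mathbf{z}\supseteq\chi\bigtriangleup\chi'$ with $\rho(\mathbf{z})>0$, I would exhibit the coin-flip outcomes (described below) that drive $\chi$ to $\chi'$ with positive probability.

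The core computation is the transition probability~(\ref{eqn:transitionrate}). Here I would condition on $\mathbf{z}$ and then classify every link according to its role. A link $i$ with $z_i=0$ is frozen and contributes nothing. For a link $i$ with $z_i=1$, Step~2 of the algorithm first checks whether any interferer was active in the previous slot $\chi$; if so, $x_i$ is forced to $0$ deterministically. Since we are computing the probability of reaching the \emph{specific} feasible target $\chi'$, only links $i\in\mathbf{z}$ with \emph{no} active interferer in $\chi$ reach the Bernoulli$(p_i)$ step, and these split into exactly four groups: (i) $i\in\chi/\chi'$ turns off, contributing $1-p_i$; (ii) $i\in\chi'/\chi$ turns on, contributing $p_i$ — and such an $i$ indeed has no interferer in $\chi$ precisely because $\chi\cup\chi'\in\Lambda$; (iii) $i\in\mathbf{z}\cap(\chi\cap\chi')$ stays on, contributing $p_i$; (iv) $i\in\mathbf{z}$ that is neither in $\chi\cup\chi'$ nor adjacent to $\chi\cup\chi'$ stays off, contributing $1-p_i$. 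Any link in $\mathbf{z}$ adjacent to (but not in) $\chi\cup\chi'$ is deterministically forced off and so contributes a factor $1$. Multiplying these independent factors and summing over all admissible $\mathbf{z}$ weighted by $\rho(\mathbf{z})$ gives exactly~(\ref{eqn:transitionrate}).

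The step I expect to be the main obstacle is the careful bookkeeping in the ``if'' direction and in the product~(\ref{eqn:transitionrate}): one must verify that the four link-classes above are mutually exclusive and jointly exhaust all \emph{randomized} links of $\mathbf{z}$ for the target $\chi'$, and in particular that no link is simultaneously forced-off (because it has an active interferer in $\chi$) and required to be in class (ii) or (iii) — this is exactly where the hypothesis $\chi\cup\chi'\in\Lambda$ is used, since it rules out a link of $\chi'$ having a neighbor in $\chi$. A secondary subtlety is confirming that every link in $\mathbf{z}\setminus(\chi\cup\chi')$ adjacent to $\chi\cup\chi'$ is genuinely deterministic (forced off) rather than contributing a probabilistic factor; this follows because such a link has an active interferer in $\chi$ — either a link of $\chi\cap\chi'$ or a link of $\chi/\chi'$ that is still active at the moment its neighbor is evaluated under the sequential realization, a point that requires noting the algorithm evaluates the previous-slot schedule $\chi(t-1)$, not the partially-updated one. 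Once these case distinctions are pinned down, the rest is a routine factorization of independent Bernoulli trials.
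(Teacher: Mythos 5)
Your proof is correct and follows essentially the same route as the paper's: condition on the realized control schedule $\mathbf{z}$, partition the links of $\mathbf{z}$ by their role (turn off, turn on, stay on, forced off, stay off), multiply the independent Bernoulli factors, and observe that links of $\mathbf{z}$ adjacent to $\chi\cup\chi'$ but not in it are forced off deterministically, which is exactly how $\mathbf{z}/(\chi\cup\chi')/\mathcal{C}(\chi)$ collapses to $\mathbf{z}/(\chi\cup\chi')/\mathcal{C}(\chi\cup\chi')$ in~(\ref{eqn:transitionrate}). One small slip in your necessity argument: the only potential conflict inside $\chi\cup\chi'$ not already excluded by $\chi,\chi'\in\Lambda$ is between $\chi/\chi'$ and $\chi'/\chi$, not between $\chi\cap\chi'$ and $\chi'/\chi$ as you wrote; your forcing argument (a newly activated link in $\chi'/\chi$ can have had no active interferer anywhere in $\chi$) does cover the right pair once restated, whereas the paper instead reads the same fact off from $\chi\bigtriangleup\chi'\subseteq\mathbf{z}\in\Lambda$.
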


\begin{proof}
we first prove the necessity and then the sufficiency.

\noindent \emph{Necessity}: Suppose $\chi$ is the current decision in time slot $t$ and $\chi'$ is the next decision in slot $t+1$. $\chi / \chi' = \{x_i| x_i(t) = 1, x_i(t+1) = 0\}$ is the set of link scheduling variables that change their state from 1 to 0. $\chi' / \chi = \{x_i| x_i(t) = 0, x_i(t+1) = 1\}$ is the set of link scheduling variables that change their state from 0 to 1.

Based on Algorithm \ref{alg:CSMA}, we have that a link scheduling variable can change its state only if it is included in the control schedule $\mathbf{z}$. Therefore, $\chi$ can transit to $\chi'$ only if there exists a control schedule $m\in \Lambda$ such that the symmetric difference $\chi \bigtriangleup \chi' = (\chi / \chi')\cup (\chi' / \chi) \subseteq \mathbf{z}$. In addition, we have $\chi \cup \chi' = (\chi / \chi') \cup (\chi' / \chi)\cup (\chi\cap \chi') \in \Lambda$, since $(\chi \cap \chi')\cup (\chi / \chi') = \chi \in \Lambda$, $(\chi \cap \chi')\cup (\chi' / \chi) = \chi' \in \Lambda$, and $(\chi / \chi') \cup (\chi' / \chi) = \chi \bigtriangleup \chi'$.

\noindent \emph{Sufficiency}: Suppose $\chi \cup \chi' \in \Lambda$ and there is an $\mathbf{z}\in \Lambda$ such that $\chi \bigtriangleup \chi' \subseteq \mathbf{z}$. Given $\mathbf{z}$ is selected randomly, we can calculate the probability for $\chi$ to transit to $\chi'$ by dividing the variables in $\mathbf{z}$ in 5 cases as follows.
\begin{itemize}
\item $x_i(t) \in \chi / \chi'$: Variable $x_i(t)$ is decided to change its state from 1 to 0, which happens with probability $1 - p_i$ with Algorithm \ref{alg:CSMA}.
\item $x_i(t) \in \chi' / \chi$: Variable $x_i(t)$ is decided to change its state from 0 to 1, which occurs with probability $p_i$.
\item $x_i(t) \in \mathbf{z} \cap (\chi \cap \chi')$: Variable $x_i(t)$ is decided to keep the sate 1, which occurs with probability $p_i$.
\item $x_i(t) \in \mathbf{z} \cap \mathcal{C}(\chi)$: Variable $x_i(t)$ has to keep its state 0. This occurs with probability 1.
\item $x_i(t) \in \mathbf{z} / (\chi \cup \chi')/ \mathcal{C}(\chi)$: Variable $x_i(t)$ decides to keep its state 0, which happens with probability with $1 - p_i$
\end{itemize}

Note that $\mathbf{z}\cap \mathcal{C}(\chi' / \chi) = \emptyset$ since $\chi'/ \chi\subseteq \mathbf{z}$, we have that $\mathbf{z}/(\chi \cup \chi')/\mathcal{C}(\chi) = \mathbf{z} / (\chi \cup \chi') / \mathcal{C}(\chi \cup \chi')$. As each variable in $\mathbf{z}$ is decided independently of each other, we can multiply these probabilities together. Summing over all possible control schedules, we can get the overall transition probability from $\chi$ to $\chi'$ as in Eqn.~(\ref{eqn:transitionrate}).
\end{proof}

\begin{lemma}\label{lemma:reversible}
\emph{A necessary and sufficient condition for the DTMC of the link scheduling decisions to be irreducible and aperiodic is}

\vspace{-4mm}{\small
\begin{align*}
\bigcup_{\mathbf{z}(t)\in \mathcal{Z}}\mathbf{z}(t) = \{x_i|\forall i \in \mathcal{E}\},
\end{align*}}\vspace{-4mm}

\noindent \emph{and in this case DTMC is reversible and has the following stationary distribution,}

\vspace{-4mm}{\small
\begin{align}
&\pi(\chi) = \frac{1}{H}\prod_{x_i\in \chi}\frac{p_i}{1-p_i},\label{eqn:stationary}\\
&H = \sum_{\chi\in \Lambda}\prod_{x_i\in \chi}\frac{p_i}{1-p_i},\label{eqn:normalization}.
\end{align}}\vspace{-4mm}
\end{lemma}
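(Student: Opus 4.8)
The plan is to treat the two assertions in turn: first the stated necessary-and-sufficient condition for irreducibility and aperiodicity, and then reversibility together with the product-form stationary distribution, the latter obtained by verifying the detailed-balance equations using the transition probabilities of Lemma~\ref{lemma:transit}.

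For the first part I would use the fact, already implicit in Lemma~\ref{lemma:transit}, that a link variable $x_i$ can change its value between consecutive super-slot boundaries only when $x_i$ is contained in the control schedule drawn in that slot. \emph{Necessity}: if some $x_i$ lies in no control schedule, i.e.\ $x_i\notin\bigcup_{\mathbf{z}(t)\in\mathcal{Z}}\mathbf{z}(t)$, then $x_i(t)$ is frozen at its initial value for all $t$, so the state space splits according to the value of $x_i$ (both $\emptyset$ and $\{x_i\}$ lie in $\Lambda$) and the chain is reducible. \emph{Sufficiency}: assume $\bigcup_{\mathbf{z}(t)\in\mathcal{Z}}\mathbf{z}(t)=\{x_i|\forall i\in\mathcal{E}\}$. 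From any $\chi\in\Lambda$ I would reach the empty schedule $\emptyset$ by deactivating active links one at a time: for $x_i\in\chi$ pick a control schedule $\mathbf{z}\ni x_i$ (which exists by hypothesis) and apply the transition that turns $x_i$ off while keeping every other active link on; since $\chi\in\Lambda$ implies the schedule $\chi$ with $x_i$ removed is also in $\Lambda$, the one-step probability in Eqn.~(\ref{eqn:transitionrate}) is a sum of strictly positive terms. The time-reversed argument reaches any feasible $\chi^{*}$ from $\emptyset$ by activating the links of $\chi^{*}$ one at a time — each partial schedule is a subset of $\chi^{*}\in\Lambda$, hence feasible, and the newly activated link has no active neighbour — so the chain is irreducible. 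Aperiodicity is immediate: for any $\chi$, picking a control schedule and letting each of its links keep its current state gives $P(\chi,\chi)>0$.

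For reversibility I would verify $\pi(\chi)P(\chi,\chi')=\pi(\chi')P(\chi',\chi)$ for every ordered pair $\chi,\chi'\in\Lambda$; when $\chi\cup\chi'\notin\Lambda$ or no control schedule contains $\chi\bigtriangleup\chi'$, both sides vanish, so only the ``adjacent'' pairs matter. The key structural observation is that, for a fixed control schedule $\mathbf{z}\supseteq\chi\bigtriangleup\chi'$, the product in Eqn.~(\ref{eqn:transitionrate}) splits into factors invariant under the swap $\chi\leftrightarrow\chi'$ — namely $\prod_{x_i\in\mathbf{z}\cap(\chi\cap\chi')}p_i$ and $\prod_{x_i\in\mathbf{z}/(\chi\cup\chi')/\mathcal{C}(\chi\cup\chi')}(1-p_i)$, since $\chi\cap\chi'$, $\chi\cup\chi'$ and $\mathcal{C}(\chi\cup\chi')$ are symmetric in $\chi,\chi'$ — and the single asymmetric factor $\prod_{x_i\in\chi/\chi'}(1-p_i)\prod_{x_i\in\chi'/\chi}p_i$, which under the swap becomes $\prod_{x_i\in\chi'/\chi}(1-p_i)\prod_{x_i\in\chi/\chi'}p_i$. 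Because the index set $\{\mathbf{z}\in\mathcal{Z}:\chi\bigtriangleup\chi'\subseteq\mathbf{z}\}$ of the outer sum is itself unchanged by the swap, the sum and the symmetric factors cancel in the ratio, leaving
\begin{align*}
\frac{P(\chi,\chi')}{P(\chi',\chi)} = \frac{\prod_{x_i\in\chi/\chi'}(1-p_i)\,\prod_{x_i\in\chi'/\chi}p_i}{\prod_{x_i\in\chi'/\chi}(1-p_i)\,\prod_{x_i\in\chi/\chi'}p_i}.
\end{align*}

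It then remains to compare this with $\pi$. Writing $\chi=(\chi\cap\chi')\cup(\chi/\chi')$ and $\chi'=(\chi\cap\chi')\cup(\chi'/\chi)$, the common factors over $\chi\cap\chi'$ cancel in $\pi(\chi)/\pi(\chi')$, giving $\pi(\chi)/\pi(\chi')=\big(\prod_{x_i\in\chi/\chi'}\frac{p_i}{1-p_i}\big)\big(\prod_{x_i\in\chi'/\chi}\frac{1-p_i}{p_i}\big)$, which is exactly the reciprocal of the displayed ratio; hence detailed balance holds. Since $\pi$ is a genuine probability distribution by the choice of the normalizer $H$ in Eqn.~(\ref{eqn:normalization}), detailed balance yields both that $\pi$ is stationary and that the chain is reversible, and under the irreducibility/aperiodicity condition it is the unique stationary distribution. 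I expect the main obstacle to be the careful accounting in the detailed-balance step: correctly partitioning the links of a control schedule into the five categories of Lemma~\ref{lemma:transit} and checking that precisely the $\chi\bigtriangleup\chi'$ factors flip under the swap while all terms involving $\chi\cap\chi'$, $\chi\cup\chi'$ and $\mathcal{C}(\chi\cup\chi')$ are untouched, so that the ratio collapses to the clean product of odds $\frac{p_i}{1-p_i}$.
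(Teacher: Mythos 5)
Your proof is correct and follows essentially the same route as the paper's: prove the necessity/sufficiency of the coverage condition for irreducibility, then verify detailed balance for the product-form $\pi$ using the transition probabilities of Lemma~\ref{lemma:transit}. You spell out a few steps the paper leaves terse (reaching $\emptyset$ by deactivating links one at a time, aperiodicity via $P(\chi,\chi)>0$, and the ratio form of detailed balance rather than the paper's symmetrized product), but the underlying argument is the same.
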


\begin{proof}
We first prove the necessity and sufficiency condition for the DTMC to be irreducible and aperiodic, and next verify the reversibility and stationary distribution.

\noindent \emph{Necessity}: Suppose $\bigcup_{\mathbf{z}(t)\in \mathcal{Z}}\mathbf{z}(t) \neq \{x_i|\forall i\in \mathcal{E}\}$. Let $x_j(t)\not\in \bigcup_{\mathbf{z}(t)\in \mathcal{Z}}\mathbf{z}(t)$. Then, we have that, from the initial state of link scheduling decision, \emph{i.e.}, $\emptyset$, the DMTC will never reach a collision-free decision including $x_j(t)$. The necessity is proved.

\noindent \emph{Sufficiency}: If $\bigcup_{\mathbf{z}(t)\in \mathcal{Z}}\mathbf{z}(t) = \{x_i|\forall i \in \mathcal{E}\}$, with Lemma \ref{lemma:transit}, we have that the initial decision $\emptyset$ can reach any other collision-free decision $\chi \in \Lambda$ with positive probability in a finite number of steps, and vice versa. To sum up, the DTMC is irreducible and aperiodic.

If allocation decision $\chi$ can transit to decision $\chi'$, we can verify that Eqn.~(\ref{eqn:stationary}) satisfies the balance equation,

\vspace{-4mm}{\small
\begin{align}\label{eqn:balance-eqn}
\pi(\chi) P(\chi, \chi') =& \frac{1}{H}\sum_{\mathbf{z}\in \mathcal{Z}: \chi \bigtriangleup \chi' \subseteq \mathbf{z}} \rho(\mathbf{z}) \left(\frac{\prod_{x_i\in \chi \cup \chi'} p_i}{\prod_{x_j\in \chi\cap \chi'} 1- p_j}\right)\notag\\
& \times \left( \prod_{x_i\in \mathbf{z}\cap (\chi\cap \chi')}p_i\right) \left(\prod_{x_i\in \mathbf{z}/ (\chi \cup \chi')/ \mathcal{C}(\chi \cup \chi')} 1- p_i\right)\notag\\
=&\pi(\chi')P(\chi', \chi).
\end{align}}\vspace{-4mm}

Therefor, the DTMC is reversible and Eqn.~(\ref{eqn:stationary}) is the stationary distribution \cite{kelly-book79}.

\end{proof}

Finally, we prove Theorem \ref{theorem:delta-weight} based on the above lemmas.

\begin{proof}
Given any $\delta$ and $\theta$ with $0 < \delta, \theta < 1$. Let $\Phi_3^*(t) = \max_{\chi \in \Lambda}\Phi_3(t)$. We define

\vspace{-4mm}{\small
\begin{align*}
\mathcal{\chi} = \left\{ \chi \in \Lambda| \Phi_3(t) < (1-\delta)\Phi_3^*(t) \right\}.
\end{align*}}\vspace{-4mm}

As the DTMC has the stationary distribution in Eqn.~(\ref{eqn:stationary}), we have that

\vspace{-4mm}{\small
\begin{align}\label{eqn:optimal-condition1}
\pi(\mathcal{\chi}) &= \sum_{\chi \in \mathcal{\chi}}\pi(\chi) = \sum_{\chi \in \mathcal{\chi}}\frac{e^{\sum_{x_i(t)\in \chi}w_i(t)}}{H}\notag\\
&\leq \frac{|\mathcal{\chi}|e^{1-\delta} \Phi_3^*(t)}{H}< \frac{2^{|\mathcal{E}|}}{e^{\delta \Phi_3^*(t)}},
\end{align}}\vspace{-4mm}

\noindent where the secondary inequality comes from the fact that $|\mathcal{\chi}|\leq |\Lambda| \leq 2^{|\mathcal{E}|}$, and $H > e^{\max_{\chi\in \Lambda}\sum_{x_i\in \chi}w_i(t)} = e^{\Phi_3^*(t)}$. Hence, if

\vspace{-4mm}{\small
\begin{align}\label{eqn:optimal-condition2}
\Phi_3^*(t) > \frac{1}{\delta}\left(|\mathcal{E}| \log 2 + \log \frac{1}{\theta}\right),
\end{align}}\vspace{-4mm}

\noindent we could have that $\pi(\mathcal{\chi})< \theta$. Since $\Phi_3^*(t)$ is a continuous, nondecreasing function of the packet queues and QoS virtual queues, \emph{i.e.}, $\Gamma(t) = \{Q_{mi}(t), Z_{mi}(t)|\forall m\in \mathcal{M}, i \in \mathcal{E}\}$, we can further have that, with $\lim_{\|\Gamma(t)\|\rightarrow \infty} \Phi_3^*(t) = \infty$, there exists a constant value $B_{\Gamma}$ such that inequality (\ref{eqn:optimal-condition2}) holds so that $\pi(\mathcal{\chi})< \delta$ whenever $\|\Gamma(t)\| > B$.


\end{proof}
}

\end{appendices}

\end{document}